\newcolumntype{L}{>{\arraybackslash}m{3cm}}
\pgfplotsset{compat=1.10}
\newcommand\addvmargin[1]{
	\node[draw=none,fill=none,fit=(current bounding box),inner ysep=#1,inner xsep=0]{};
} 
\newtheorem{thm}{Theorem}
\newtheorem{lem}[thm]{Lemma}
\newtheorem{prop}[thm]{Proposition}
\newtheorem{cor}[thm]{Corollary}
\newtheorem{defn}[thm]{Definition}
\newtheorem{obs}[thm]{Observation}
\newtheorem{clm}[thm]{Claim}
\newtheorem{example}[thm]{Example}
\newcommand{\cl}[1]{\mathcal{#1}} 
\newcommand{\RR}{\mathbb{R}}      
\newcommand{\NN}{\mathbb{N}}      
\newcommand{\EE}{\mathbb{E}}	  
\newcommand{\citet}[1]{\cite{#1}}
\DeclareMathOperator*{\argmin}{argmin}
\DeclareMathOperator*{\supp}{supp}
\renewcommand*{\@fnsymbol}[1]{\ensuremath{\ifcase#1\or *\or 
		\mathsection\or \mathparagraph\or \|\or **\or \dagger\dagger
		\or \ddagger\ddagger \else\@ctrerr\fi}}
\title{Incentive-Compatible Selection Mechanisms for Forests\thanks{This project has received funding from the European Research Council (ERC) under the European Union's Horizon 2020 research and innovation programme (grant agreement  n$^o  $ 740435).}}
\author{Yakov Babichenko \quad\qquad and \quad\qquad Oren Dean \qquad\quad  and \quad\qquad Moshe Tennenholtz\\
	\small{
		yakovbab@tx.technion.ac.il \qquad\qquad orendean@campus.technion.ac.il \qquad\quad\quad moshet@ie.technion.ac.il\quad\qquad
	}\\
	Technion --- Israel Institute of Technology\\	Haifa, Israel
}
\begin{document}
\maketitle
\begin{abstract}
	Given a directed forest-graph, a probabilistic \emph{selection mechanism} is a probability distribution over the vertex set. A selection mechanism is \emph{incentive-compatible} (IC), if the probability assigned to a vertex does not change when we alter its outgoing edge (or even remove it). The quality of a selection mechanism is the worst-case ratio between the expected progeny under the mechanism's distribution and the maximal progeny in the forest. In this paper we prove an upper bound of 4/5 and  a lower bound of $ 1/\ln16\approx0.36 $ for the quality of any IC selection mechanism. The lower bound is achieved by two novel mechanisms and is a significant improvement to the results of \citet{Babichenko:2018:ID:3178876.3186043}. The first, simpler mechanism, has the nice feature of generating distributions which are fair (i.e., monotone and proportional). The downside of this mechanism is that it is not exact (i.e., the probabilities might sum-up to less than 1). Our second, more involved mechanism, is exact but not fair. We also prove an impossibility for an IC mechanism that is both exact and fair and has a positive quality.
\end{abstract}  
\section{Introduction}
Incentive-compatible selection mechanisms have been studied before in different settings (see Section~\ref{sec: related work}). The motivation for this research stems from many scenarios in which agents approve / disapprove each other, and an administrator is required to select one or more `worthy' agents. To name a few examples:
\begin{enumerate}
	\item The selection of a prize winner in an academic field according to peer reviews.
	\item The selection of an influential user in a social network.
	\item Web search engines select central web-pages by links from other web-pages.
\end{enumerate}
In all of the above examples, the agents have an incentive to misreport their true appreciation or interest in others, if that will lead to their own selection. An \emph{incentive-compatible} (IC) selection mechanism, is a selection mechanism that guarantees that the selection of an agent is independent of his own out-links. A \emph{probabilistic selection mechanism} assigns each agent a selection-probability. The requirement of IC for a probabilistic mechanism is that the probability assigned to an agent will be independent of his own out-links. Selection mechanisms differ in their purpose---some try to maximize a specific graph theoretic measure of the selected agent (most commonly the in-degree) while others guarantee some set of properties (e.g., an agent with a unanimous support is selected, an agent with no support is not selected). \\
In this paper we search for IC, probabilistic selection mechanisms that maximize the expected progeny\footnote{The progeny of a vertex in a network is the number of vertices with paths to this vertex. In a forest, the progeny is the order of the subtree underneath the vertex. See the formal definition in the beginning of Section~\ref{sec: model}.} of the selected agent. We assume that the given network has the structure of a forest; that is, the maximal out-degree is one, and there are no cycles. This network structure arises naturally in cases where the agents join the network sequentially and are allowed to connect to at most one of the users who preceded them. For example, if the agents are customers and a link from customer $ x $ to customer $ y $ denotes that customer $ y $ is the one who referred/recommended the service to customer $ x $, then we get a structure of a forest. In this example, selecting a customer with high progeny means that we are interested in a customer which brought many new customers, both directly (his own sons in the forest) and indirectly (deeper levels in his sub-tree). This example is closely related to mechanisms of multi-level marketing payments (e.g., \citet{Emek:2011:MMM:1993574.1993606, Babaioff:2011:BRB:2325702.2325704, AM-Netecon-11, Douceur:2007:LTM:1282427.1282395}). \\
There are two additional properties of a probabilistic selection mechanism which we consider as desirable. The first is that with probability 1 the mechanism selects an agent (i.e., that the sum of probabilities in any forest is exactly 1). A mechanism with this property is called an \emph{exact} mechanism. The second property is that the mechanism will be \emph{fair}. The requirement of fairness is twofold: \begin{enumerate*}[label=(\alph*)]
	\item monotonicity: higher progeny leads to higher selection-probability;
	\item proportionality: the ratio between the selection-probabilities of two agents depends only on their progenies.
\end{enumerate*} 
See Definition~\ref{dfn: fairness} for a formal definition of these notions. \\
We present two novel IC probabilistic selection mechanisms with a similar worst-case expected progeny-approximation of about 1/3. The first mechanism is fair, but not exact. The second mechanism is exact, but not fair. We then prove an impossibility theorem which states that there are no IC selection mechanisms that are both exact and fair, and with a positive expected progeny-approximation.

\subsection{Related work and our contribution}\label{sec: related work}
Broadly speaking, our paper relates to the track of works on approximate mechanism design without money (\citet{Procaccia:2013:AMD:2542174.2542175}, \citet{Caldarelli07}, \citet{Babaioff:2016:MDS:2872312.2841227}). The goal in these works is to offer mechanisms that provide strategy-proof solutions to problems that have an exact optimal solution which is not strategy-proof (e.g., facility allocation). Naturally, strategyproofness comes at the expense of optimality, and the challenge is to bound this loss.\\
Of those papers who deal with the problem of IC selection in networks, the most similar to ours is \citet{Babichenko:2018:ID:3178876.3186043}. In that paper, the authors offer several IC probabilistic selection mechanisms for trees, forests, and acyclic graphs. They offer a mechanism for forests for which the ratio between the expected progeny and the maximal progeny is proportional to $ n/P^*r $; $ P^* $ being the maximal progeny, $ n $ the number of agents and $ r $ the number of roots in the forest. Clearly this ratio is not bounded from below by any positive number.\footnote{For instance, in a forest with one star of order $ n/2 $ and $ n/2 $ singleton vertices, this yields a ratio of $ 4/n $, which goes to zero as $ n $ goes to infinity.} Hence, our two mechanisms which guarantee a ratio of at least 1/3 {for any forest} are a substantial improvement. \\
The work in \citet{AFPT11} was the first to present the model of IC selection mechanisms in networks, striving to optimize the sum of in-degrees of the selected set. They proved a strong impossibility for an IC deterministic mechanism and offered a probabilistic mechanism based on the idea to randomly partition the agents to voters and candidates. This mechanism does not give a good bound on the progeny of the selected agent. Further works with better mechanisms or slightly different setting can be found in \citet{Fischer:2014:OIS:2600057.2602836, 10.1007/978-3-319-13129-0_10, Bjelde:2017:ISP:3174276.3107922,Kurokawa:2015:IPR:2832249.2832330}.\\
Several works have offered an axiomatic approach to the problem. In these works the authors define a set of desirable axioms and investigate the possibility/impossibility of mechanisms that fulfil maximal subsets of these axioms. To name a few examples of these works,  \citet{ECTA:ECTA1291,MACKENZIE201515,Aziz:2016:SPS:3015812.3015872}. \\
In \cite{Altman:2008:AFR:1622655.1622669} the authors considered the possibility of complete ranking mechanisms under certain axioms.

\emph{Our contribution and paper organization}. In this paper we investigate IC, probabilistic selection mechanisms for forests. We measure the quality of a mechanism as the worst-case ratio between the expected progeny of the selection and the maximal progeny in the forest. The exact model and the formal definitions are in Section~\ref{sec: model}. Maximizing for the progeny is significantly harder than maximizing for the in-degree, since the dependence on a single edge is potentially much larger.
 The mechanism offered in \citet{Babichenko:2018:ID:3178876.3186043} gives a positive bound on the progeny approximation provided the trees in the forest are balanced (the average of their orders is not too far from the maximal order). In Section~\ref{sec: schemes} we suggest two novel mechanisms with a progeny approximation of about 1/3 for all forests. None of these mechanisms is superior to the other; each has a desirable property not present in the other---one of them is fair and the other is exact. 

\section{Preliminaries}\label{sec: model}
A directed forest is an acyclic, directed graph with maximal out-degree 1. Let $ N $ be a set of $ n $ vertices, and let $ F(N,E) $ be a directed forest on $ N $. The roots of $ F $ are those vertices which do not have an out-edge; we denote these vertices by $ R(F) $. For a vertex $ x\in N $, we denote by $ T(x;F) $ the subtree of $ F $ in which $ x $ is the root. We denote by $ P(x;F)=|T(x;F)| $, the \emph{progeny} of $ x $, and by $ P^*(F)=\max\limits_{x\in N}P(x;F)=\max\limits_{x\in R(F)}P(x;F) $, the maximal progeny in $ F $.\\
We will often use the structure of a \emph{star} in our proofs. A $ k $-star is a tree with one {centre} vertex (the root) and $ k-1 $ {leaf} vertices. When we speak of an \emph{isolated} vertex we mean a vertex with no out-edge and no in-edges. \\
Let $ \cl{F}^N $ be the family of all directed forests on $ N $. A selection mechanism is a function $ \cl{M}:N\times\cl{F}^N\to [0,1] $ such that for every $ F\in\cl{F}^N $, $ \sum\limits_{x\in N}\cl{M}(x,F)\leq 1 $. We abuse notation and denote a series of mechanisms $ \{\cl{M}\}_{|N|=1}^{\infty} $ by just $ \cl{M} $. We think of $ \cl{M}(\cdot,F) $ as a probability distribution over $ N $, with the possibility of no-selection. The probability of no selection is $ \cl{M}(\emptyset,F)=1-\sum\limits_{x\in N}\cl{M}(x,F) $. \\
When the forest is obvious from the context, we might omit the extra parameter everywhere and just write: $ R, P^*	,T(x),P(x),\cl{M}(x) $. \\
In this paper we look at mechanisms which are \emph{Incentive-Compatible} (IC). Incentive-compatibility means that for any two forests $ F,F'\in \cl{F}^N $ and a vertex $ x\in N $ such that $ F,F' $ differ only on the out-edge of $ x $, $ \cl{M}(x;F)=\cl{M}(x;F') $.
 Denote by $ F_x $ the forest we get from $ F $ by removing the out-edge of $ x $ (if any).
 \begin{clm}\label{clm: IC equivalence}
Mechanism $ \cl{M} $ is {Incentive-Compatible} if and only if for any forest $ F $ and for any vertex $ x\in N $, $ \cl{M}(x;F)=\cl{M}(x;F_x) $.
 \end{clm}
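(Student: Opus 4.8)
The plan is to prove both directions directly from the definitions, the only genuine content being the observation that the operation $F \mapsto F_x$ ``collapses'' all forests that agree outside the out-edge of $x$ to a single canonical representative.

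First I would handle the forward implication. Assume $\cl{M}$ is IC. Fix a forest $F$ and a vertex $x$. Note that $F$ and $F_x$ differ only on the out-edge of $x$: either $x$ has no out-edge in $F$, in which case $F_x = F$ and the claim is immediate, or $x$ has exactly one out-edge in $F$ (out-degree is at most $1$), and $F_x$ is obtained by deleting precisely that edge. In either case $F$ and $F_x$ form an admissible pair in the definition of incentive-compatibility, so applying the definition with $F' = F_x$ yields $\cl{M}(x;F) = \cl{M}(x;F_x)$.

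For the converse, assume $\cl{M}(x;F) = \cl{M}(x;F_x)$ for every $F$ and every $x$. Let $F, F'$ be any two forests differing only on the out-edge of $x$. The key step is that $F_x = F'_x$: removing the out-edge of $x$ from each forest deletes whichever (at most one) edge emanates from $x$, and since $F$ and $F'$ agree on every edge not incident out of $x$, the two resulting edge sets coincide. Hence $\cl{M}(x;F) = \cl{M}(x;F_x) = \cl{M}(x;F'_x) = \cl{M}(x;F')$, which is exactly the IC condition.

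There is essentially no obstacle here; the proof is a short unwinding of definitions. The one point that deserves an explicit sentence is the identity $F_x = F'_x$ for forests $F,F'$ differing only on the out-edge of $x$ — this is what lets a two-forest condition be repackaged as a one-forest condition, and it relies on the forest structure (out-degree at most one) so that ``the out-edge of $x$'' is well defined.
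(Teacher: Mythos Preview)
Your proof is correct and follows essentially the same approach as the paper's own proof: both directions are handled by noting that $F$ and $F_x$ form an admissible pair for the IC definition, and that $F_x = F'_x$ whenever $F,F'$ differ only on the out-edge of $x$. Your write-up is slightly more explicit (e.g., separating the cases where $x$ has or lacks an out-edge, and pointing out the role of out-degree $\leq 1$), but the underlying argument is identical.
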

 \begin{proof}
 	If $ \cl{M} $ is IC, then by definition $ \cl{M}(x;F)=\cl{M}(x;F_x) $  for any $ x,F $. On the other hand, if this condition holds, then for any $ x $ and $ F,F' $ that differ only on the out-edge of $ x $, $ \cl{M}(x;F)=\cl{M}(x;F_x) $ and $ \cl{M}(x;F')=\cl{M}(x;F'_x) $. But $ F_x=F'_x $ (since $ F,F' $ differ only on the out-edge of $ x $), and hence $ \cl{M}(x;F)=\cl{M}(x;F') $, which means that $ \cl{M} $ is IC.
\end{proof}
 A nice consequence of this view of incentive compatibility is that in order to define an IC mechanism, it is enough to define the probabilities assigned to the roots in every forest. 
\begin{cor}\label{cor: roots mecchanism}	
	Let $ \cl{M}_R $ be a mechanism that distributes probabilities on the roots of every forest. Then there is at most one way to extend it to an IC selection mechanism.
\end{cor}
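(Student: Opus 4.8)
The plan is to show that incentive-compatibility, together with the prescribed behaviour on roots, already pins down the probability of \emph{every} vertex in \emph{every} forest, so that any two IC extensions of $\cl{M}_R$ must coincide. The only tool needed is Claim~\ref{clm: IC equivalence}: in any IC mechanism one has $\cl{M}(x;F)=\cl{M}(x;F_x)$, where $F_x$ is $F$ with the out-edge of $x$ deleted.

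First I would observe that in $F_x$ the vertex $x$ is a root, since by construction it has no out-edge (and in the boundary case where $x$ was already a root of $F$ we simply have $F_x=F$). Because $\cl{M}$ is an extension of $\cl{M}_R$, its values on the roots of any forest are prescribed by $\cl{M}_R$; in particular $\cl{M}(x;F_x)=\cl{M}_R(x;F_x)$. Combining this with Claim~\ref{clm: IC equivalence} yields
\[
\cl{M}(x;F)=\cl{M}_R(x;F_x)
\]
for every forest $F$ and every vertex $x\in N$. The right-hand side depends only on $\cl{M}_R$, not on $\cl{M}$; hence if $\cl{M}$ and $\cl{M}'$ are both IC extensions of $\cl{M}_R$, then $\cl{M}(x;F)=\cl{M}_R(x;F_x)=\cl{M}'(x;F)$ for all $x$ and $F$, i.e.\ $\cl{M}=\cl{M}'$.

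There is essentially no obstacle here — the statement is an immediate consequence of Claim~\ref{clm: IC equivalence}, and the only point requiring a moment's care is the harmless case $x\in R(F)$, in which $F_x=F$ and the displayed identity degenerates to the defining property of an extension. It is worth stressing in the write-up that the corollary asserts only \emph{uniqueness} and not existence: the candidate formula $\cl{M}(x;F):=\cl{M}_R(x;F_x)$ is forced, but for an arbitrary $\cl{M}_R$ it need not define a legitimate selection mechanism (for instance the induced probabilities may sum to more than $1$). This is precisely why the constructions in the later sections must take some effort to exhibit root-distributions whose induced extension is a valid mechanism.
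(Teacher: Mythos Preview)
Your proof is correct and follows essentially the same route as the paper: both use Claim~\ref{clm: IC equivalence} to reduce $\cl{M}(x;F)$ to $\cl{M}(x;F_x)$, observe that $x$ is a root of $F_x$, and conclude that $\cl{M}_R$ determines the value. Your added remark that the corollary yields only uniqueness (not existence) is apt and matches the paper's own commentary immediately following the proof.
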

\begin{proof}
	Let $ \cl{M} $ be an IC mechanism that extends $ \cl{M}_R $. For any $ F, x $, if $ x $ is a root then $ \cl{M}(x;F)=\cl{M}_R(x;F) $. Otherwise, by Claim~\ref{clm: IC equivalence}, $ \cl{M}(x;F)=\cl{M}(x;F_x)=\cl{M}_R(x;F_x) $ since $ x $ is a root in $ F_x $. In any case, the probability of $ x $ is determined by $ \cl{M}_R $.
\end{proof}
Not every roots-distribution mechanism is extendible to an IC selection mechanism. For example, the mechanism that always gives the root with highest progeny (with lexicographic tie-breaking) a probability of 0.6 leads to an IC extension which distributes more than 1 already in a forest with a single edge.\footnote{If $ x $ is the first vertex in the lexicographic order, then in the forest with the single edge $ (x,y) $ for any $ y $, both $ x $ and $ y $ get a probability of 0.6.} When we present our two mechanisms in Section~\ref{sec: schemes} we will first define their roots-probabilities and then prove that their IC extensions are well-defined.\\

Of course, there are many IC mechanisms. For instance, the empty mechanism which gives a probability of 0 to all vertices in any forest, or the uniform mechanism which gives a probability of $ 1/n $ to all vertices in any forest. Our goal is to find mechanisms with a good approximation for the maximal progeny, in the worst-case. In this sense, the optimal mechanism is the one which gives a probability of 1 to the vertex with the highest progeny. This mechanism cannot be IC, though. Take for instance the two forests in Figure~\ref{fig: IC example}. The IC requirement implies that $ \cl{M}(a;F_1)=\cl{M}(a;F_2) $. However, in $ F_2 $, $ P(a;F_2)=3=\tfrac{1}{2}P(b;F_2) $. This simple example shows that we cannot avoid distributing at least part of the probability to vertices with progeny at most half of the highest progeny.
\begin{figure}[h]
	\setlength{\tabcolsep}{3mm} 
	\def\arraystretch{1} 
	\centering
	\begin{tabular}{c|c}
		\begin{tikzpicture}[scale=2/3, line width=0.45mm,every node/.style={draw,circle,inner sep=0pt, minimum size=15pt}]
		\node[](a) at (0pt,0pt){$ a $};
		\node[](a1) at (-15pt,-30pt){};
		\node[](a2) at (15pt,-30pt){};

		\node[](b) at (100pt,0pt){$ b $};
		\node[](b1) at (85pt,-30pt){};
		\node[](b2) at (115pt,-30pt){};

		\draw[->] (a1) -- (a);
		\draw[->] (a2) -- (a);
		\draw[->] (b1) -- (b);
		\draw[->] (b2) -- (b);

		\draw(50pt,-50pt) node[below,draw=none,fill=none,inner sep=0pt] {$F_1$};
		\end{tikzpicture}		   &  
		\begin{tikzpicture}[scale=2/3, line width=0.45mm,every node/.style={draw,circle,inner sep=0pt, minimum size=15pt}]
		\node[](a) at (0pt,0pt){$ a $};
		\node[](a1) at (-15pt,-30pt){};
		\node[](a2) at (15pt,-30pt){};
		
		\node[](b) at (100pt,0pt){$ b $};
		\node[](b1) at (85pt,-30pt){};
		\node[](b2) at (115pt,-30pt){};

		\draw[->] (a1) -- (a);
		\draw[->] (a2) -- (a);
		\draw[->] (b1) -- (b);
		\draw[->] (b2) -- (b);
		\draw[->] (a) -- (b);
		
		\draw(50pt,-50pt) node[below,draw=none,fill=none,inner sep=0pt] {$F_2$};
		\end{tikzpicture}		   
	\end{tabular}
\caption{}
\label{fig: IC example}
\end{figure}

We define the quality of mechanism $ \cl{M} $ for the forest $ F $ to be its normalized expected progeny:
\[ Q(\cl{M};F):=\dfrac{\EE[P(x)]_{x\sim\cl{M}(F)}}{P^*(F)}=\dfrac{\sum_{x\in N}\cl{M}(x;F)\cdot P(x;F)}{P^*(F)}. \]
Then the quality of the mechanism is:
\[ Q(\cl{M})=\lim\limits_{|N|\to\infty}\min\limits_{F\in\cl{F}^N}Q(\cl{M};F). \]
In Section~\ref{sec: schemes} we will show two IC mechanisms for which  $ Q(\cl{M})\geq 1/3 $. These are subtle mechanisms, as even coming up with a mechanism with a positive quality is non-trivial.
Clearly, for any mechanism, $ Q(\cl{M})\leq 1 $. In the following proposition we bound it away from 1.

\begin{prop}\label{prp: upper bound}
	For any IC mechanism $ \cl{M} $, $ Q(\cl{M})\leq4/5 $.
\end{prop}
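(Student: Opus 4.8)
The strategy is to exhibit a single family of forests on which any IC mechanism $\cl{M}$ must lose at least a $1/5$ fraction of the optimal progeny, exploiting the same tension illustrated in Figure~\ref{fig: IC example}: the probability a vertex receives is pinned down by the forest in which its out-edge is deleted, so a vertex that looks like a big root in one forest must carry the same probability when it is demoted to a small subtree in another. First I would set up, for a parameter $k$, a "gadget'' consisting of a vertex $v$ that is the root of a $k$-star (so $P(v)=k$ when $v$ is a root), and consider attaching several such gadgets under a common new root. Concretely, take $F$ to be a tree whose root $w$ has $m$ children $v_1,\dots,v_m$, each $v_i$ being the centre of a $(k-1)$-leaf star; then $P^*(F)=P(w;F)=m k+1$, while $P(v_i;F)=k$. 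By Claim~\ref{clm: IC equivalence}, $\cl{M}(v_i;F)=\cl{M}(v_i;(F)_{v_i})$, and $(F)_{v_i}$ is a forest in which $v_i$ is a root of a $k$-star sitting beside the rest of the structure — this lets me relate the $v_i$-probabilities in $F$ to probabilities in simpler forests.

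The crux is a budget argument. In the forest $G$ consisting of $m$ disjoint $k$-stars with roots $v_1,\dots,v_m$, the total probability satisfies $\sum_i \cl{M}(v_i;G)\le 1$, so by averaging some $v_{i_0}$ has $\cl{M}(v_{i_0};G)\le 1/m$. I want to transplant this into a forest where $v_{i_0}$'s subtree is small but some other vertex has huge progeny. The cleanest version: build $F$ so that each $v_i$, with its out-edge deleted, sees exactly the forest $G$ (or a forest forcing the same small bound); then in $F$ we have $\cl{M}(v_i;F)\le 1/m$ for every $i$, hence $\sum_i \cl{M}(v_i;F)\cdot P(v_i;F) = \sum_i \cl{M}(v_i;F)\cdot k$ is controlled, and the mass not on the $v_i$ must instead sit on $w$ and the star-leaves, whose progenies are $mk+1$ and $1$ respectively. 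Writing $Q(\cl{M};F)$ in terms of $\cl{M}(w;F)$ and the leftover mass, and then perturbing $F$ to a second forest $F'$ (e.g.\ one where $w$'s out-edge is added so $w$ itself becomes a small-progeny vertex, pinning $\cl{M}(w;F)=\cl{M}(w;F')$ to a small value) gives two inequalities that cannot be simultaneously escaped; optimizing the free parameters $k,m$ drives the achievable quality down to $4/5$.

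I would carry the steps out in this order: (1) define the gadget and the forests $F$, $F'$, $G$ precisely, and compute all relevant progenies; (2) use Claim~\ref{clm: IC equivalence} and the sub-stochasticity $\sum_x \cl{M}(x;\cdot)\le 1$ to derive an upper bound on $\sum_i \cl{M}(v_i;\cdot)$ valid in \emph{both} $F$ and $F'$; (3) in $F$, bound $Q(\cl{M};F)$ from above by the mass on $w$ times $1$ plus the mass elsewhere times its (smaller, relative) progeny; (4) in $F'$, where $w$ has been pushed under an even larger root, use IC to transport $\cl{M}(w)$ and conclude that mass on $w$ is also bounded; (5) add the two bounds and optimize over $k$ and $m$ to get the constant $4/5$. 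The main obstacle is engineering $F$ and $F'$ so that the \emph{same} vertices have their deleted-out-edge forests coincide with the small "budget'' forest $G$ while simultaneously one of $F,F'$ has a root of progeny close to the full vertex count — i.e.\ making the IC constraints from the two forests overlap on enough vertices to force the loss; getting the arithmetic of the progenies to close at exactly $4/5$ rather than some weaker constant is the delicate part, and may require the nested construction to have two levels rather than one.
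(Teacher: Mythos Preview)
Your plan is in the right spirit---use IC to pin probabilities across forests and a budget constraint to force a loss---but it is substantially over-engineered and contains two concrete flaws.

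First, your IC transport does not work as stated. If $F$ has root $w$ with children $v_1,\dots,v_m$, then $(F)_{v_i}$ is \emph{not} the forest $G$ of $m$ disjoint stars: the other $v_j$ are still attached to $w$, so $(F)_{v_i}$ consists of one $k$-star (rooted at $v_i$) together with one $((m-1)k+1)$-tree (rooted at $w$). Your budget bound $\cl{M}(v_{i_0};G)\le 1/m$ therefore says nothing about $\cl{M}(v_i;F)$. You flag this as ``the main obstacle'' but do not resolve it.

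Second, the step ``add an out-edge to $w$ so $w$ becomes a small-progeny vertex'' rests on a misconception: $P(w)$ is the size of the subtree \emph{below} $w$ and is unchanged by adding an out-edge from $w$. Moreover, in your $F$ you already have $P^*(F)=|N|$, so no rearrangement on the same vertex set can make the ratio $P(w)/P^*$ any smaller; there is simply no room for a larger root.

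The paper's proof avoids all of this by taking the minimal instance: two vertices $a,b$ (or, for general $n$, two $n/2$-stars). In $F_1$ both are isolated; by symmetry each gets probability $\alpha$, so $Q(\cl{M};F_1)=2\alpha$. In $F_2$ one adds the edge $b\to a$; by IC $b$ still gets $\alpha$, and $a$ gets $\beta\le 1-\alpha$, so $Q(\cl{M};F_2)\le (2(1-\alpha)+\alpha)/2=1-\alpha/2$. Hence $Q(\cl{M})\le\min\{2\alpha,\,1-\alpha/2\}$, maximized at $\alpha=2/5$ with value $4/5$. No extra root $w$, no parameter $m$, no nesting---just two forests on two stars. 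Even if your general-$m$ construction were repaired, it would not improve on this: extra roots only give the mechanism more high-progeny places to park mass, so the two-star case is already the tightest.
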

\begin{proof}
	Consider the two forests on two vertices in Figure~\ref{fig: quality bound}.
	\begin{figure}[h]
		\setlength{\tabcolsep}{3mm} 
		\def\arraystretch{1} 
		\centering
		\begin{tabular}{c|c}
			\begin{tikzpicture}[scale=2/3, line width=0.45mm,every node/.style={draw,circle,inner sep=0pt, minimum size=15pt}]
			\node[label={left:$\alpha$}](a) at (0pt,0pt){$ a $};
			\node[label={left:$\alpha$}](b) at (0pt,-40pt){$ b $};
			%
			
			\draw(0pt,-70pt) node[below,draw=none,fill=none,inner sep=0pt] {$F_1$};
			\end{tikzpicture}		   &  
			\begin{tikzpicture}[scale=2/3, line width=0.45mm,every node/.style={draw,circle,inner sep=0pt, minimum size=15pt}]
			\node[label={left:$\beta$}](a) at (0pt,0pt){$ a $};
			\node[label={left:$\alpha$}](b) at (0pt,-40pt){$ b $};
			
			\draw[->] (b) -- (a);
			
			\draw(0pt,-70pt) node[below,draw=none,fill=none,inner sep=0pt] {$F_2$};
			\end{tikzpicture}		   
		\end{tabular}
		\caption{}
		\label{fig: quality bound}
		\end{figure}
		
		To the left of each vertex we have denoted its probability by a generic IC mechanism. Notice that we have used a symmetry assumption when we assumed that in $ F_1 $ the vertices get equal probabilities. In the last paragraph of this section, we explain why symmetry may always be assumed without loss of generality. We calculate the performance of the mechanism for each of these forests.
	\begin{flalign*}
	Q(\cl{M};F_1)&=2\alpha,\\
	Q(\cl{M};F_2)&=\dfrac{2\beta+\alpha}{2}=\beta+\dfrac{\alpha}{2}.
	\end{flalign*}
	The quality of $ \cl{M} $ is at most the minimum of these two expressions. Clearly, we may replace $ \beta $ with its highest possible value, $ 1-\alpha $, to get,
	\begin{flalign*}
	Q(\cl{M})\leq\min\left \{2\alpha,1-\alpha/2 \right \}.
	\end{flalign*}
	Choosing $ \alpha $ which gives the highest minimum, we find that $ \alpha=2/5 $ and $ Q\leq 4/5 $. We remark that we can achieve this bound for any $ n $. We just need to replace the two vertices in Figure~\ref{fig: quality bound} with two $ n/2 $-stars.
\end{proof}
There are two additional properties which we consider as desirable. The first is that the mechanism will be ``fair''. It would be nice if we could require that the probability of a vertex only depends on its progeny. It is not hard to see that this notion of fairness is too strong. For example, start with the empty forest with $ n $ vertices. Then all the vertices get the same probability of $ 1/n $. If we add a single edge from $ x $ to $ y $, then by IC $ x $ still gets $ 1/n $, and by fairness, everyone other than $ y $ should get just like $ x $. This leaves a probability of at most $ 1/n $ for $ y $. Using induction, we can see that for any forest with a single star, the probability of the centre vertex is at most $ 1/n $, which implies an infinitely decaying quality when $ n $ goes to infinity. We suggest the following weaker notion of fairness.
\begin{defn}\label{dfn: fairness}
	Mechanism $ \cl{M} $ is \emph{fair}, if $ \forall F\in\cl{F}^N $ and $ \forall x,y\in R(F) $,
	\begin{enumerate}[label=\alph*)]
		\item (monotonicity) if  $ P(x)> P(y) $, then $ \cl{M}(x;F)\geq\cl{M}(y;F) $; 		
		\item (proportionality) the ratio $ \cl{M}(x;F)/\cl{M}(y;F) $ depends only on $ P(x;F), P(y;F) $.
	\end{enumerate}
\end{defn}
Monotonicity means that the root of a larger sub-tree gets at least the probability of the root of a smaller sub-tree. Proportionality implies in particular that the ratio $ \cl{M}(x)/\cl{M}(y) $ is not influenced by edges outside of $ T(x),T(y) $, nor by the internal structure of $ T(x),T(y) $. In order for the ratio $ \cl{M}(x)/\cl{M}(y) $ to be well-defined, we must also require that the mechanism is positive (i.e., that all the vertices get a positive probability). We can relax this requirement by taking the closure of all positive, fair mechanisms. More precisely, let $ \{\cl{M}_i\}_{i\in\NN} $ be an infinite series of mechanisms. We say that $ \{\cl{M}_i\} $ converges to mechanism $ \cl{M} $ if for every forest $ F\in N $ and every vertex $ x\in N $, $ \cl{M}(x;F)=\lim\limits_{i\to\infty}\cl{M}_i(x;F) $.
We say that $ \cl{M} $ is \emph{fair in limit} if there is a series of positive, fair mechanisms which converge to $ \cl{M} $. If mechanism $ \cl{M} $ is fair in limit, then it can be approximated with a fair mechanism. 

The second desirable feature is that of being exact.
\begin{defn}\label{dfn: budget-balanced}
	Mechanism $ \cl{M} $ is \emph{exact} if for every forest $ F\in\cl{F}^N $, $ \cl{M}(\emptyset,F)=0 $.
\end{defn}
In other words, exactness means that the probabilities assigned by $ \cl{M} $ always sum up to exactly 1. A mechanism which is not exact can only improve its quality, if it will distribute the extra probability; however, the IC requirement might prevent it from doing so.

Let $ x,y\in N $ be such that there is an automorphism of $ F $ that takes $ x $ to $ y $. A \emph{symmetric} mechanism is one for which, under these conditions, $ \cl{M}(x;F)=\cl{M}(y;F) $. Any mechanism can be converted to a symmetric mechanism by picking a random automorphism of $ F $ before using the original mechanism. If the original mechanism was IC/fair/exact and with quality $ Q $, then the symmetric mechanism will possess these features as well. This is why we allowed ourselves to assume in the proof of Proposition~\ref{prp: upper bound} that a general mechanism is symmetric. This observation will also be useful to us in the proof of our impossibility theorem (Section~\ref{sec: imposibility}). Another implication is that our two mechanisms, which are not symmetric, can be made symmetric without hurting any of their properties.

\subsection{Conventions and notations}
As in previous figures, we use circles to denote the vertices in our diagrams. The label of the vertex is marked inside the circle. In the forthcoming diagrams, a $ k $-star is represented by a diamond, with `$ k $' denoted below it. The label inside the diamond is for the centre of the star, and an edge between two diamonds translates to an edge between the centre vertices of the corresponding stars. For example, the tree in Figure~\ref{fig: conventions 1} is equivalent to $ F_2 $ in Figure~\ref{fig: IC example}.
\begin{figure}[h]

\centering
	\begin{tikzpicture}[scale=1, line width=0.45mm,every node/.style={draw,diamond,inner sep=0pt, minimum size=15pt}]

\node[label={below:$3$}](x1) at (0pt,-10pt){$ a $};
\node[label={below:$3$}](x2) at (50pt,-10pt){$ b $};
\draw[->] (x1) -- (x2);

\end{tikzpicture}
	\caption{ }
	\label{fig: conventions 1}
	\end{figure}
	
	When we draw a forest we might add dashed lines and vary the length of the edges so that every vertex is positioned in a distance from the left which is proportional to its progeny. For example, looking at the forest in Figure~\ref{fig: conventions 2} it is clear, both graphically and numerically, that $ P(x_1)<P(y_1)<P(x_2) $ and that $ P(x_3)=P(y_4) $. We emphasize that the dashed lines are not part of the description of the forest and serve only for visualization purposes.
\begin{figure}[h!]
	
	\centering
	\begin{tikzpicture}[scale=1, line width=0.45mm,
	every node/.style={draw,inner sep=0pt, minimum size=15pt}]
	
	\node[diamond,label={below:$4$}](x1) at (80pt,0pt){$ x_1 $};
	\node[diamond,label={below:$2$}](x2) at (120pt,0pt){$ x_2 $};
	\node[diamond,label={below:$ 4 $}](x3) at (200pt,0pt){$ x_3 $};
	
	\draw[dashed] (0pt,0pt) -- (x1);
	\draw[->] (x1) -- (x2);
	\draw[->] (x2) -- (x3);
	
	\node[diamond,label={below:$5$}](y1) at (100pt,-40pt){$ y_1 $};	
	\node[diamond,label={below:$3$}](y2) at (160pt,-40pt){$ y_2 $};		
	\node[circle,label={below:}](y3) at (180pt,-40pt){$ y_3 $};		
	\node[circle,label={below:}](y4) at (200pt,-40pt){$ y_4 $};			

	\draw[dashed] (0pt,-40pt) -- (y1);	
	\draw[->] (y1) -- (y2);
	\draw[->] (y2) -- (y3);
	\draw[->] (y3) -- (y4);
	
	\end{tikzpicture}
	\caption{ }
	\label{fig: conventions 2}
	\end{figure}
	
\section{Two mechanisms}\label{sec: schemes}
In this section we present our two novel mechanisms. To that end we use a few simple observations. Notice that the progeny of any vertex is exactly one more than the sum of progenies of its direct sons. This implies the following.\\

\begin{obs}\label{obs: progeny}\quad
\begin{enumerate}[label=\alph*)]
	\item The progeny of any vertex $ x $ is the highest in its sub-tree, $ T(x) $; the highest progeny in the forest is achieved by at least one of the roots.
	\item The highest progeny in $ T(x)\backslash\{x\} $ is achieved by at least one of the direct sons of $ x $. 
	\item Let $ y,z\in T(x)\backslash\{x\} $ be two different vertices. Assume that $ P(y)\geq\tfrac{1}{2}P(x),P(z)\geq\tfrac{1}{2}P(x) $. Then there must be a path between $ y $ and $ z $.\footnote{Otherwise, there is a vertex $ w\in T(x) $ with a path $ P_y $ from $ y $ to $ w $ and a path $ P_z $ from $ z $ to $ w $ and $ P_y\cap P_z=\{w\} $. This implies that $ P(w)\geq 1+P(y)+P(z)\geq 1+P(x) $, in contradiction to $ a) $.} If the path is from $ y $ to $ z $, then $ P(z)\geq P(y)+1 $, and vice versa.
\end{enumerate}
\end{obs}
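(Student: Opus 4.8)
The plan is to derive all three parts from the single identity noted just before the statement: for every vertex $x$ with direct sons $s_1,\dots,s_k$, we have $P(x) = 1 + \sum_{i=1}^k P(s_i)$. This holds because $T(x)$ is the disjoint union of $\{x\}$ and the subtrees $T(s_1),\dots,T(s_k)$, and the progeny is just the order of the subtree. Part a) follows immediately by an easy induction on the height of $T(x)$: since every $P(s_i) \geq 1$, the identity gives $P(x) > P(s_i)$ for each son, and $P(s_i)$ in turn dominates everything in $T(s_i)$ by the inductive hypothesis; hence $P(x)$ is strictly largest in $T(x)$. The second sentence of a) is the special case where $x$ ranges over $R(F)$: every vertex lies in $T(\rho)$ for a unique root $\rho$, so the global maximum is attained at a root.

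For part b), apply the same identity at $x$: the set $T(x)\setminus\{x\}$ is the disjoint union of the $T(s_i)$, so the maximum of $P$ over $T(x)\setminus\{x\}$ is $\max_i \big(\max_{v\in T(s_i)} P(v)\big) = \max_i P(s_i)$ by part a), and that maximum is witnessed by whichever son $s_i$ achieves it. For part c), I would argue by contradiction exactly as sketched in the footnote: if there is no path between $y$ and $z$ (in either direction), then neither is an ancestor of the other in the tree $T(x)$, so their paths toward the root $x$ first meet at some common vertex $w$ with the two approach-paths internally disjoint. Then $T(w)$ contains $w$ together with two disjoint subtrees, one containing $y$ and one containing $z$, whence $P(w) \geq 1 + P(y) + P(z) \geq 1 + \tfrac12 P(x) + \tfrac12 P(x) = 1 + P(x)$, contradicting part a) (since $w \in T(x)$ forces $P(w) \leq P(x)$). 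Therefore a path exists; if it runs from $y$ to $z$, then $z$ is a proper ancestor of $y$, so $T(y) \subsetneq T(z)$ and in fact $y \in T(z)\setminus\{z\}$, giving $P(z) \geq 1 + P(y)$ by the identity applied at $z$ — and symmetrically if the path runs the other way.

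None of the steps is a real obstacle; the only point requiring a little care is the tree-structure claim in part c) — namely that the absence of a directed path between $y$ and $z$ forces a well-defined "meeting vertex" $w$ with internally disjoint approach-paths. This is where acyclicity and out-degree at most $1$ are used: inside the tree $T(x)$ every vertex has a unique path to the root $x$, so the ancestor relation is a tree order, and two incomparable elements $y,z$ have a well-defined least common ancestor $w$; the two paths $y \rightsquigarrow w$ and $z \rightsquigarrow w$ meet only at $w$ precisely because $y,z$ are incomparable. Once that is in place the progeny counting inequality is immediate from part a).
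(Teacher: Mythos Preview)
Your proposal is correct and follows exactly the approach the paper takes: the paper merely states the identity $P(x)=1+\sum_i P(s_i)$ as the source of all three parts and, for part~c), gives precisely the least-common-ancestor contradiction you spell out in its footnote. You have simply filled in the routine details (the induction for a), the disjoint-union argument for b), the LCA existence via out-degree~$\leq 1$) that the paper leaves implicit.
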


We add the notation $ \underline{P}(x)=\max\limits_{y\in T(x)\backslash\{x\}}P(y) $, for the highest progeny in $ T(x) $, excluding $ x $. 

\subsection{A fair mechanism}\label{sec: log scheme}
For our first mechanism we need to assume a total ordering of the vertices by their progenies. We achieve this ordering by breaking ties lexicographically. Hence we will write $ P(x)\succ P(y) \iff (P(x)>P(y))\vee ((P(x)=P(y))\wedge (x<y))  $. Let $ R=\{r_1,r_2,\ldots,r_{|R|}\} $ be a decreasing ordering of the roots by $ \succ $, i.e., $ P(r_i)\succ P(r_{i+1}) $ for all $ i $. We denote by $ r_i(F') $ the $ i $-th root in the forest $ F' $.  Remember that we denote by $ P^*=P(r_1) $ the highest progeny in $ F $.\\
The idea behind our first mechanism is first to recognize the subset of vertices which will get a positive probability (i.e., the support of the mechanism). We need to make sure that this subset is IC and contains only vertices with high progeny. The second step is to set IC probabilities on this subset such that the probabilities are high enough to get a good quality, but not too high as to not distribute more than 1. Specifically, we define the set
\[ A:=\{x\in N: x=r_1(F_x) \}.  \]
That is, $ A $ is the set of vertices $ x $ such that in the graph $ F_x $, $ x $ has the highest progeny (including tie-breaking). Notice that this definition is incentive-compatible in the sense that $ x\in A(F)\iff x\in A(F_x) $, which means that the out-edge of $ x $ does not affect the decision of whether or not it belongs to $ A $.

We use Corollary~\ref{cor: roots mecchanism} and define our mechanism as the IC extension of a roots-distribution mechanism. It is not hard to see that if $ r_1 $ is the only root with a positive probability, we get a mechanism for which the support is precisely the set $ A $ (this can be observed directly from the definition of $ A $). We thus define the following roots-distribution:

\begin{flalign*}
&\cl{M}_f(r_1)=\begin{cases}
\dfrac{1}{2},&|A|=1\\
\dfrac{1}{2}\log_2\dfrac{P(r_1)}{\underline{P}(r_1)},&|A|\geq 2
\end{cases}\\
&\forall r\in R\backslash\{r_1\},\;\cl{M}_f(r)=0,
\end{flalign*}
and we extend it to an IC mechanism which we also denote $ \cl{M}_f $. \\
Before proving the exact properties of $ \cl{M}_f $ (Theorem~\ref{theorem: fair mechanism}), we demonstrate its workings with a couple of examples. Example~\ref{exm: fair 1} is intended to give the intuition that $ \cl{M}_f $ is well-defined; i.e., that $ \sum_{x\in A}\cl{M}_f(x)\leq1 $. Example~\ref{exm: fair 2} is intended to give the intuition that $ \sum_{x\in A}\cl{M}_f(x)\geq \dfrac{1}{2} $. Since $ \forall x\in A, P(x)\geq\tfrac{1}{2}P^* $,\footnote{If $ P(x)<\tfrac{1}{2}P^* $ then in $ F_x $ there is a tree of order $ P^*-P(x)>P(x) $ and $ x $ cannot be the root with the highest progeny.} this readily implies that $ Q(\cl{M}_f)\geq 1/4 $ (though in Theorem~\ref{theorem: fair mechanism} we prove a better bound).
\begin{example}\label{exm: fair 1}
	Consider the forest in Figure~\ref{fig: fair 1}. If we remove the out-edge of $ b $, we get the forest $ F_{b} $, in which $ b $ has the highest progeny. That is, $ b=r_1(F_b) $. In $ F_b $ there is no other vertex which can be the root of the largest tree when we remove its out-edge, hence $ A(F_b)=\{b\} $. Thus, $ \cl{M}_f(b)=\dfrac{1}{2} $. For $ 1\leq i\leq 4 $, $ c_i=r_1(F_{c_i}) $ and $ A(F_{c_i})=\{b,c_1,\ldots,c_i\} $. Hence, $ \cl{M}_f(c_i)=\dfrac{1}{2}\log_2\dfrac{P(c_i)}{\underline{P}(c_i)}=\dfrac{1}{2}\log_2\dfrac{6+i}{6+i-1} $. We now get that 
	\begin{flalign*}
	\sum_{x\in N}\cl{M}_f(x)=\dfrac{1}{2}+\dfrac{1}{2}\sum_{i=1}^{4}\log_2\dfrac{6+i}{6+i-1}=\dfrac{1}{2}+\dfrac{1}{2}\log_2\dfrac{10}{6}<1.
	\end{flalign*}
	\begin{figure}[h!]
		
		\centering
		\begin{tikzpicture}[scale=1, line width=0.45mm,
		every node/.style={draw,inner sep=0pt, minimum size=15pt}]
		
		\node[diamond,label={below:$4$}](x1) at (80pt,0pt){$ a $};
		\node[diamond,label={below:$2$}](x2) at (120pt,0pt){$ b $};
		\node[circle,label={below:}](x3) at (140pt,0pt){$ c_1 $};
		\node[circle,label={below:}](x4) at (160pt,0pt){$ c_2 $};
		\node[circle,label={below:}](x5) at (180pt,0pt){$ c_3 $};
		\node[circle,label={below:}](x6) at (200pt,0pt){$ c_4 $};
		
		\draw[dashed] (0pt,0pt) -- (x1);
		\draw[->] (x1) -- (x2);
		\draw[->] (x2) -- (x3);
		\draw[->] (x3) -- (x4);
		\draw[->] (x4) -- (x5);
		\draw[->] (x5) -- (x6);
		
		\node[diamond,label={below:$5$}](y1) at (100pt,-40pt){$ d $};	
		\draw[dashed] (0pt,-40pt) -- (y1);	
		
		\end{tikzpicture}
		\caption{ }
		\label{fig: fair 1}
	\end{figure}
	
\end{example}
The thing to notice in Example~\ref{exm: fair 1} is that the last vertex in $ A $ gets a probability of at most 1/2, and the sum of probabilities for the rest of the vertices is at most $ \dfrac{1}{2}\log_2\dfrac{P^*}{\tfrac{1}{2}P^*}=\dfrac{1}{2} $. Hence the mechanism is well-defined.
\begin{example}\label{exm: fair 2}
	In Figure~\ref{fig: fair 2}, $ \cl{M}_f(a)=0 $ since in the forest $ F_a $ it is only a second-highest root. In $ F_b $, $ b $ is the highest root and the second highest root is $ d $ with progeny 3. Although vertex $ a $ is not in $ A $, it would be after we remove the out-edge of $ b $. Thus, $ A(F_b)=\{b,a\} $, and $ \cl{M}_f(b)=\dfrac{1}{2}\log_2\dfrac{P(b)}{P(a)}=\dfrac{1}{2}\log_2\dfrac{7}{4} $. Similarly, we find that $ \cl{M}_f(c)=\dfrac{1}{2}\log_2\dfrac{9}{7} $ and $ \cl{M}_f(d)=\dfrac{1}{2}\log_2\dfrac{10}{9} $. Hence,
	\begin{flalign*}
	\sum_{x\in N}\cl{M}_f(x)=\dfrac{1}{2}\log_2\dfrac{10}{4}>\dfrac{1}{2}.
	\end{flalign*}
	\begin{figure}[h!]
		
		\centering
		\begin{tikzpicture}[scale=1, line width=0.45mm,
		every node/.style={draw,inner sep=0pt, minimum size=15pt}]
		
		\node[diamond,label={below:$4$}](x1) at (80pt,0pt){$ a $};
		\node[diamond,label={below:$3$}](x2) at (140pt,0pt){$ b $};
		\node[diamond,label={below:$2$}](x3) at (180pt,0pt){$ c $};		
		\node[circle,label={below:}](x4) at (200pt,0pt){$ d $};
		
		\draw[dashed] (0pt,0pt) -- (x1);
		\draw[->] (x1) -- (x2);
		\draw[->] (x2) -- (x3);
		\draw[->] (x3) -- (x4);		
		
		\end{tikzpicture}
		\caption{ }
		\label{fig: fair 2}
	\end{figure}
	
\end{example}
In Example~\ref{exm: fair 2} no vertex gets the fixed probability of 1/2. This happens because vertex $ b $, which is the last vertex in $ A $, is not the only vertex in $ A(F_b) $. In this case the second vertex in $ A(F_b) $ (i.e., vertex $ a $) has a progeny lower than $ \tfrac{1}{2}P^*(F) $. Hence the total probabilities in this kind of forests is at least $ \dfrac{1}{2}\log_2\dfrac{P^*}{\tfrac{1}{2}P^*}=\dfrac{1}{2} $.\\

Proceeding to a formal analyse of mechanism $ \cl{M}_f $, we first give an alternative definition of the set $ A $ (Claim~\ref{clm: PP mechanism}) and an explicit description of the mechanism's distribution (Lemma~\ref{lem: fair mechanism}). 
\begin{clm}\label{clm: PP mechanism}
	The following is an alternative definition of $ A $:
	\[ 	A=\{x\in T(r_1): P(x)\succ\max\{ \tfrac{1}{2}P(r_1), P(r_2) \}  \}.\footnote{$ P(x)\succ \tfrac{1}{2}P(r_1)\iff ((P(x)>\tfrac{1}{2}P(r_1))\vee((P(x)=\tfrac{1}{2}P(r_1))\wedge(x<r_1)) $.} \]
\end{clm}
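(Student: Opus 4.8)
The plan is to prove the two inclusions between the ``operational'' set $A=\{x\in N: x=r_1(F_x)\}$ and the ``explicit'' set $A'=\{x\in T(r_1): P(x)\succ\max\{\tfrac12 P(r_1),P(r_2)\}\}$. First note that both sets are insensitive to out-edges, so it suffices to reason about progenies of vertices in $F$ together with the lexicographic tie-break.

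First I would show $A\subseteq A'$. Suppose $x=r_1(F_x)$. Deleting the out-edge of $x$ only splits off the subtree hanging above $x$ (if $x$ is not already a root); all progenies of vertices inside $T(x;F)$ are unchanged, and $x$ becomes a root of $F_x$ with progeny $P(x;F)$. Since $x$ is the top root of $F_x$ under $\succ$, in particular $P(x)\succeq P(y)$ for every other root $y$ of $F_x$. Now I must produce two such roots with the right progenies. (i) The root $r_1(F)$ of $F$ either equals $x$ or is still a root in $F_x$ (deleting $x$'s out-edge can only create roots, not destroy them) — but if $r_1(F)\ne x$ and lies outside $T(x;F)$ it is a competing root in $F_x$ with progeny $P(r_1)=P^*$, forcing $P(x)\succeq P^*$, hence $P(x)=P^*$ and $x=r_1(F)$ anyway; this shows $x\in T(r_1;F)$ and $P(x)\succeq\tfrac12 P(r_1)$ trivially. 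The genuinely useful case is when $x$ lies strictly inside $T(r_1;F)$: then $F_x$ contains the root $r_1$ with progeny $P(r_1)-P(x)$, so $P(x)\succeq P(r_1)-P(x)$, i.e.\ $P(x)\succ\tfrac12 P(r_1)$ (the strictness of the $\succ$ version being handled by the tie-break footnote). (ii) For the bound against $P(r_2)$: if $x\ne r_1(F)$ then as just argued $x\in T(r_1;F)$, and $r_2(F)$ is a root disjoint from $T(r_1;F)$, hence still a root in $F_x$ with progeny $P(r_2)$, so $P(x)\succ P(r_2)$. If $x=r_1(F)$ then $P(x)=P^*\ge P(r_2)$ and $x<r_2$ is false only if $P^*=P(r_2)$ with $x>r_2$, contradicting $r_1$ being the $\succ$-largest root; so $P(x)\succ P(r_2)$ here too. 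Combining, $x\in A'$.

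Next, $A'\subseteq A$. Take $x\in T(r_1;F)$ with $P(x)\succ\max\{\tfrac12 P(r_1),P(r_2)\}$; I must check $x=r_1(F_x)$, i.e.\ that in $F_x$ no root beats $x$ under $\succ$. The roots of $F_x$ are: $x$ itself; all roots of $F$ other than possibly $r_1$ (precisely, $r_1$ survives iff $x\ne r_1$, and $x$'s removal may also promote the vertex above $x$ on the path to $r_1$ to being the root of the ``upper'' piece). Every root of $F_x$ other than $x$ and other than the new upper-piece root is a root $r_i$ of $F$ with $i\ge 2$, so has progeny $P(r_i)\le P(r_2)\prec P(x)$. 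The upper-piece root has progeny $P(r_1)-P(x)<\tfrac12 P(r_1)<P(x)$ by hypothesis (again, ties resolved by the footnote's convention, using that $x\in T(r_1)$ so $x$ and $r_1$ are comparable in the tie-break). And if $x=r_1(F)$ there is no upper piece at all. Hence $x$ is the $\succ$-maximal root of $F_x$, so $x\in A$.

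I expect the main obstacle to be the careful bookkeeping around the tie-breaking order $\succ$: the footnoted convention $P(x)\succ\tfrac12 P(r_1)$ has to be interpreted consistently (comparing $x$ with $r_1$ lexicographically when $P(x)=\tfrac12 P(r_1)$, which is legitimate precisely because $x\in T(r_1)$), and one must make sure the inequality $P(x)\succeq P(r_1)-P(x)$ coming from ``$x$ beats the upper piece'' really does upgrade to the strict $\succ$ against $\tfrac12 P(r_1)$ in the boundary case $P(x)=P(r_1)-P(x)$ — here the upper-piece root is some ancestor $w$ of $x$ with $w$ lying on the $x$–$r_1$ path, and the relevant comparison is between $x$ and $w$, which needs a separate small argument (e.g.\ $w\ne x$ and one invokes the precise definition of $r_1(F_x)$ using lexicographic order). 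Everything else is the routine observation that deleting $x$'s out-edge leaves $T(x;F)$ untouched and merely detaches it, changing only the progeny of $x$'s former ancestors.
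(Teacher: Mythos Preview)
Your argument is correct and follows essentially the same two-inclusion strategy as the paper (the paper phrases the $A\subseteq A'$ direction contrapositively and invokes Observation~\ref{obs: progeny} for the converse, but the content is the same). One small slip to clean up: the ``upper-piece'' root in $F_x$ is $r_1$ itself, not the parent of $x$ or some intermediate ancestor $w$ --- removing $x$'s out-edge does not delete any other edge, so every proper ancestor of $x$ retains its out-edge and $r_1$ remains the root of that component with progeny $P(r_1)-P(x)$; this also means the tie-break comparison in the boundary case $P(x)=\tfrac12 P(r_1)$ is precisely $x$ versus $r_1$, matching the footnote, so no ``separate small argument'' is needed.
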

\begin{proof}
	If $ x\notin T(r_1) $ then clearly $ x $ is not in the largest tree in $ F_x $, and $ x\notin A $. If $ x\in T(r_1) $ and $ P(x)\prec \tfrac{1}{2}P(r_1) $, then in $ F_x $, $ P(r_1(F);F_x)=P^*-P(x)\succ P(x) $, which means that $ x\neq r_1(F_x) $. Likewise, if $ x\in T(r_1) $ and $ P(x)\prec P(r_2) $ then $ P(r_2(F);F_x)=P(r_2(F);F)\succ P(x;F)=P(x;F_x)  $, and again $ x\neq r_1(F_x) $.\\
	For the other direction, assume that $ x\in T(r_1) $ and $ P(x)\succ \max\{\tfrac{1}{2}P(r_1),P(r_2) \} $. Assume for contradiction that there is a vertex $ y $ such that $ P(y;F_x)\succ P(x;F_x) $. If $ y\notin T(r_1) $ then $ P(y;F_x)\leq P(r_2;F) $ which is a contradiction. If $ y\in T(r_1) $ then by Observation~\ref{obs: progeny} there is a path between $ x $ and $ y $. Since $ P(y;F)\geq P(y;F_x)\succ P(x) $, this path is from $ x $ to $ y $; but then $ P(y;F_x)=P(y)-P(x)\preceq\tfrac{1}{2}P(r_1)\prec P(x) $, which is again a contradiction.
\end{proof}
From Claim~\ref{clm: PP mechanism} and Observation~\ref{obs: progeny} we conclude that $ A $ is a path. Denote $ A=\{r_1=a_{|A|},\ldots,a_1\} $.
\begin{lem}\label{lem: fair mechanism} 
	The support of $ \cl{M}_f(F) $ is $ A(F) $. Furthermore,	\begin{enumerate}[label=\alph*)]
		\item If $ |A|=1 $ then $ \cl{M}_f(r_1)=\dfrac{1}{2} $.
		\item If $ |A|=k\geq 2 $ then
		\begin{flalign*}
		1<\forall i\leq k ,\;\cl{M}_f(a_i)=\dfrac{1}{2}\log_2\dfrac{P(a_i)}{P(a_{i-1})},\\
		\dfrac{1}{2}\log_2\dfrac{2P(a_1)}{P^*}\leq \cl{M}_f(a_1)\leq\dfrac{1}{2}.
		\end{flalign*}
		
	\end{enumerate}	
\end{lem}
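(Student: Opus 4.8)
The plan is to read off the distribution $\cl{M}_f(F)$ one vertex at a time from the IC identity $\cl{M}_f(x;F)=\cl{M}_f(x;F_x)$ (Claim~\ref{clm: IC equivalence} and Corollary~\ref{cor: roots mecchanism}) together with the definition of the roots-distribution, using the description of $A$ in Claim~\ref{clm: PP mechanism}. Two preliminary facts are convenient to isolate. First, the parent of any non-root element of $A$ again lies in $A$ (it has strictly larger progeny, so it inherits the defining inequality of Claim~\ref{clm: PP mechanism}); hence the path $A=\{a_1,\dots,a_k=r_1\}$ is traced out by parent links, $a_i$ being the parent of $a_{i-1}$, and $P(a_1)<\dots<P(a_k)=P^*$. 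Second, for $i\ge2$ we have $P(a_i)>\tfrac12P^*$ strictly (equality would leave no element of $A$ below $a_i$, forcing $i=1$) and $\underline P(a_i;F)=P(a_{i-1})$: by Observation~\ref{obs: progeny}, $\underline P(a_i)$ is attained at a direct son of $a_i$; the son $a_{i-1}$ has $P(a_{i-1})\ge\tfrac12P^*$, whereas every other direct son $s$ of $a_i$ has $P(s)\le P(a_i)-1-P(a_{i-1})<\tfrac12P^*\le P(a_{i-1})$.

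For the support and part a): in $F_x$ the vertex $x$ is a root, so $\cl{M}_f(x;F)=\cl{M}_f(x;F_x)$ is just the roots-probability assigned to $x$ in $F_x$. This is $0$ when $x\ne r_1(F_x)$, i.e.\ when $x\notin A$, and it equals $\tfrac12$ or $\tfrac12\log_2\tfrac{P(x)}{\underline P(x;F_x)}$ when $x\in A$, both strictly positive (the logarithm because $\underline P<P$ always). Hence $\supp\cl{M}_f(F)=A$. If $|A(F)|=1$ then $A(F)=\{r_1\}$ and, since $F_{r_1}=F$, the roots-distribution returns $\cl{M}_f(r_1)=\tfrac12$.

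For part b) with $2\le i\le k$ (the case $i=k$ being immediate, as $F_{a_k}=F$ and $|A(F)|\ge2$): removing the out-edge of $a_i$ does not disturb $T(a_i)$, so $P(a_i;F_{a_i})=P(a_i)$ and $\underline P(a_i;F_{a_i})=P(a_{i-1})$. Since $a_i\in A$ we have $a_i=r_1(F_{a_i})$, and the direct son $a_{i-1}$ of $a_i$ satisfies $P(a_{i-1})\succ\max\{\tfrac12P^*,P(r_2)\}$, which (using $P(a_i)>\tfrac12P^*$) dominates the threshold $\max\{\tfrac12P(a_i),\,P^*-P(a_i),\,P(r_2)\}$ appearing in Claim~\ref{clm: PP mechanism}; thus $a_{i-1}\in A(F_{a_i})$ and $|A(F_{a_i})|\ge2$. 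The second branch of the roots-distribution then yields $\cl{M}_f(a_i;F)=\cl{M}_f(a_i;F_{a_i})=\tfrac12\log_2\tfrac{P(a_i)}{P(a_{i-1})}$.

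The delicate case, and the main obstacle, is $i=1$, where $|A(F_{a_1})|$ can be $1$ or $\ge2$. Here $\cl{M}_f(a_1;F)=\cl{M}_f(a_1;F_{a_1})$ and $\underline P(a_1;F_{a_1})=\underline P(a_1;F)=:\underline P(a_1)$; writing $m=\max\{P^*-P(a_1),P(r_2)\}$ (up to tie-breaking) for the second-largest root progeny in $F_{a_1}$, Claim~\ref{clm: PP mechanism} gives $|A(F_{a_1})|=1$ exactly when $\underline P(a_1)\preceq\max\{\tfrac12P(a_1),m\}$. In that case $\cl{M}_f(a_1)=\tfrac12$, and both displayed bounds are immediate from $P(a_1)\le P^*$. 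Otherwise $\underline P(a_1)\succ\max\{\tfrac12P(a_1),m\}$ and $\cl{M}_f(a_1)=\tfrac12\log_2\tfrac{P(a_1)}{\underline P(a_1)}$; the upper bound $\cl{M}_f(a_1)\le\tfrac12$ is the inequality $\underline P(a_1)\ge\tfrac12P(a_1)$ contained in the case hypothesis, and for the lower bound I would note that a vertex $s^*$ realising $\underline P(a_1)$ is a strict descendant of $a_1$, hence $s^*\notin A(F)$ (as $a_1$ is the bottom of $A$), so $P(s^*)\preceq\max\{\tfrac12P^*,P(r_2)\}$; were $P(s^*)>\tfrac12P^*$, this would force $P(s^*)\preceq P(r_2)\preceq m$, contradicting $\underline P(a_1)\succ m$. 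Hence $\underline P(a_1)\le\tfrac12P^*$ and $\cl{M}_f(a_1)=\tfrac12\log_2\tfrac{P(a_1)}{\underline P(a_1)}\ge\tfrac12\log_2\tfrac{2P(a_1)}{P^*}$. The one genuinely fiddly aspect throughout is keeping the $\succ$ tie-breaking straight in the comparisons among $\underline P(a_1)$, $\tfrac12P(a_1)$, $\tfrac12P^*$, $P^*-P(a_1)$ and $P(r_2)$; once that is handled, the regime $|A(F_{a_1})|\ge2$ confines $\underline P(a_1)$ to the interval $[\tfrac12P(a_1),\tfrac12P^*]$, which is exactly what the two bounds on $\cl{M}_f(a_1)$ encode.
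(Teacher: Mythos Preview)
Your proof is correct and follows essentially the same approach as the paper's: reduce each $\cl{M}_f(a_i;F)$ to the root-value in $F_{a_i}$ via IC, identify $\underline{P}(a_i;F_{a_i})=P(a_{i-1})$ for $i\ge2$, and for $a_1$ split on whether $|A(F_{a_1})|=1$ or $\ge2$ to obtain the two bounds. If anything you are slightly more careful than the paper in explicitly verifying $|A(F_{a_i})|\ge2$ for $i\ge2$ (which the paper uses implicitly) and in tracking the tie-breaking in the $a_1$ case.
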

\begin{proof}
	The fact that $ \supp(\cl{M}_f)=A $ is immediate from the definitions of $ A $ and $ \cl{M}_f $; and so is claim~$ a) $. \\
	Suppose that $ k\geq 2 $. Since $ A $ is a path, for any $ 2\leq i\leq k $, $ a_{i-1} $ is the vertex with the largest progeny in $ T(a_i)\backslash\{a_i\} $. Clearly this is still true in $ F_{a_i} $, hence $ P(a_{i-1})=\underline{P}(a_i;F_{a_i}) $. From the definition of $ \cl{M}_f $ we get that $ \cl{M}_f(a_i)=\cl{M}_f(a_i;F_{a_i})=\dfrac{1}{2}\log_2\dfrac{P(a_i)}{\underline{P}({a_i};F_{a_i})}=\dfrac{1}{2}\log_2\dfrac{P(a_i)}{P(a_{i-1})} $.\\
	For the lower bound on $ \cl{M}_f(a_1) $, notice that $ \underline{P}(a_1)\prec\max\{\tfrac{1}{2}P(r_1),P(r_2) \} $, otherwise we would have had another vertex in $ A $. If $ \underline{P}(a_1)<\max\{P^*-P(a_1),P(r_2) \} $, then $ A(F_{a_1})=\{a_1\} $, and $ \cl{M}_f(a_1)=\dfrac{1}{2}=\dfrac{1}{2}\log_2\dfrac{2P^*}{P^*}\geq\dfrac{1}{2}\log_2\dfrac{2P(a_1)}{P^*} $.\footnote{This is the case with vertex $ b $ in Example~\ref{exm: fair 1}.} If $ \underline{P}(a_1)\geq\max\{P^*-P(a_k), P(r_2)\}$, then $ \cl{M}_f(a_1)=\dfrac{1}{2}\log_2\dfrac{P(a_1)}{\underline{P}(a_1)}>\dfrac{1}{2}\log_2\dfrac{2P(a_1)}{P^*} $.\footnote{This is the case with vertex $ b $ in Example~\ref{exm: fair 2}.} For the upper bound, notice that if $ |A|\geq 2 $ then $ \underline{P}({r_1})\geq\dfrac{1}{2}P^* $, hence the mechanism never assigns a probability higher than 1/2 to a single vertex.
\end{proof}
We can now prove the main theorem for mechanism $ \cl{M}_f $. The fact that $ \cl{M}_f $ is well-defined is an easy corollary of Lemma~\ref{lem: fair mechanism} and the exact quality is proved by a standard analysis. To prove that this mechanism is fair in limit we build a series of fair mechanisms that converge to our mechanism.
\begin{thm}\label{theorem: fair mechanism}
	Mechanism $\cl{M}_f $ is well-defined, IC, fair (in limit), and with quality $ Q(\cl{M}_f)\geq 1/\ln16\approx 0.36 $.
\end{thm}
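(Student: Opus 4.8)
The plan is to prove the four assertions in roughly increasing order of difficulty, with Lemma~\ref{lem: fair mechanism} doing most of the work for the first three. For well-definedness and incentive-compatibility: by Corollary~\ref{cor: roots mecchanism} the unique candidate for the IC extension is forced to be $\cl{M}_f(x;F)=\cl{M}_f(x;F_x)$, reading the right-hand side off the roots-distribution (legitimate because $x$ is a root of $F_x$); since $(F_x)_x=F_x$, this definition meets the criterion of Claim~\ref{clm: IC equivalence} by inspection, so it is IC as soon as it is a legitimate mechanism, i.e. $\sum_{x\in N}\cl{M}_f(x;F)\le 1$ for every $F$. I would read this off Lemma~\ref{lem: fair mechanism}: the support is the path $A=\{a_1,\dots,a_k=r_1\}$; the case $|A|=1$ contributes total mass $1/2$, and for $k\ge 2$ the masses telescope,
\[
\sum_{i=1}^{k}\cl{M}_f(a_i)=\cl{M}_f(a_1)+\tfrac12\sum_{i=2}^{k}\log_2\tfrac{P(a_i)}{P(a_{i-1})}=\cl{M}_f(a_1)+\tfrac12\log_2\tfrac{P^*}{P(a_1)}\le\tfrac12+\tfrac12\log_2\tfrac{P^*}{P(a_1)}\le 1,
\]
where the last step uses that $a_1\in A$ forces $P(a_1)\ge\tfrac12P^*$.

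For the quality bound, write $p_i:=P(a_i)$ and adopt the convention $p_0:=P^*/2$ (note $p_0\le p_1$ since $a_1\in A$). If $|A|=1$ then $Q(\cl{M}_f;F)=\tfrac12P^*/P^*=\tfrac12$. If $|A|=k\ge2$, Lemma~\ref{lem: fair mechanism} gives $\cl{M}_f(a_1)\ge\tfrac12\log_2\tfrac{2p_1}{P^*}=\tfrac12\log_2\tfrac{p_1}{p_0}$, so in every case $\cl{M}_f(a_i)\ge\tfrac12\log_2\tfrac{p_i}{p_{i-1}}$ for $1\le i\le k$, whence
\[
Q(\cl{M}_f;F)\cdot P^*=\sum_{i=1}^{k}\cl{M}_f(a_i)\,p_i\ \ge\ \frac{1}{2\ln2}\sum_{i=1}^{k}p_i\ln\tfrac{p_i}{p_{i-1}}.
\]
The crux is the one-line estimate $p_i\ln\tfrac{p_i}{p_{i-1}}=\int_{p_{i-1}}^{p_i}\tfrac{p_i}{t}\,dt\ge\int_{p_{i-1}}^{p_i}1\,dt=p_i-p_{i-1}$, valid because $t\le p_i$ on the interval; summing telescopes to $p_k-p_0=P^*/2$. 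Hence $Q(\cl{M}_f;F)\ge\tfrac{1}{4\ln2}=\tfrac1{\ln16}$, and since the $|A|=1$ case yields $\tfrac12>\tfrac1{\ln16}$, we conclude $Q(\cl{M}_f)\ge1/\ln16$.

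The step I expect to be the real obstacle is fairness in limit: $\cl{M}_f$ is plainly not fair (all its roots-mass sits on $r_1$), so one must genuinely construct a sequence $\{\cl{M}_f^{(m)}\}_m$ of positive fair mechanisms converging to it. The natural approach keeps the ``telescoping along the dominant path'' skeleton but softens the hard membership test for $A$: one natural candidate is to fix a strictly increasing weight $w_m$ growing fast enough that $w_m(p)/w_m(p-1)\to\infty$ and let the roots-distribution give every root $r$ a probability that is strictly positive, has pairwise ratios $w_m(P(r))/w_m(P(r'))$ (which makes monotonicity and proportionality hold by fiat), and is normalised so that the top root retains (asymptotically) the mass $\cl{M}_f(r_1;F)$. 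One then has to verify, by a telescoping estimate parallel to the well-definedness argument above but with the perturbed weights, that the induced IC extension still distributes at most $1$ in every forest; that positivity and fairness hold by construction; and that, for each fixed $F$, the mass of every root other than $r_1$ and of every vertex outside $A$ tends to $0$ while on $A$ the perturbed formula collapses back to that of $\cl{M}_f$, giving $\cl{M}_f^{(m)}(x;F)\to\cl{M}_f(x;F)$ pointwise. The delicate point is tuning $w_m$ and the normalisation so that the ``$\le 1$'' constraint survives the perturbation and the limit is exactly $\cl{M}_f$; everything else is bookkeeping once Lemma~\ref{lem: fair mechanism} is in hand.
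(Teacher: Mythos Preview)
Your arguments for well-definedness, incentive-compatibility, and the quality bound are correct and essentially identical to the paper's; your quality computation is in fact a slight streamlining, since by introducing $p_0:=P^*/2$ and applying the estimate $p_i\ln(p_i/p_{i-1})\ge p_i-p_{i-1}$ uniformly for $i=1,\dots,k$ you avoid the paper's separate treatment of the $a_1$ term (which it handles via the monotonicity of $z\mapsto z\ln(2z/e)$ on $[1/2,1]$) and telescope directly to $P^*/2$.

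Where you diverge is fairness in limit, and here you are over-engineering. The paper does not build a perturbed mechanism with softened membership and carefully tuned weights; it simply takes, for each $\epsilon>0$, the roots-distribution
\[
\cl{M}_\epsilon(r;F)\;=\;\cl{M}_f\bigl(r_1(F);F\bigr)\cdot\epsilon^{\,P^*(F)-P(r)}.
\]
Monotonicity is trivial since $\cl{M}_f(r_1)>0$ and the exponent is nonincreasing in $P(r)$; proportionality holds because the ratio of any two roots is $\epsilon^{P(r')-P(r)}$, depending only on the two progenies; positivity is immediate; and $\cl{M}_\epsilon\to\cl{M}_f$ pointwise because for $r\ne r_1$ the factor $\epsilon^{P^*-P(r)}$ kills the mass while $r_1$ retains exactly $\cl{M}_f(r_1)$. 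So the ``delicate tuning'' you anticipate is unnecessary: a single geometric damping already does the job. Your general plan would also work, but at the cost of several verifications (the $\le 1$ constraint under perturbation, convergence on every vertex) that the explicit family sidesteps.
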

\begin{proof}
	The mechanism is IC by definition. If $ |A|=1 $ then only $ r_1 $ has a positive probability of $ \cl{M}_f(r_1)=\dfrac{1}{2} $; and if $ |A|=k\geq2 $,
	\begin{flalign*}
	\sum_{x\in N}\cl{M}_f(x)=\sum_{i=1}^{k}\cl{M}_f(a_i)\leq\dfrac{1}{2}\sum_{i=2}^{k}\log_2\dfrac{P(a_i)}{P(a_{i-
			1})}+\dfrac{1}{2}=\dfrac{1}{2}\log_2\dfrac{P^*}{P(a_{1})}+\dfrac{1}{2}\leq\dfrac{1}{2}\log_2\dfrac{P^*}{\tfrac{1}{2}P^*}+\dfrac{1}{2}= 1, 
	\end{flalign*}
	which shows that this mechanism is well-defined. We turn to bound $ Q(\cl{M}_f) $. If $ |A|=1 $, then clearly $ Q(\cl{M}_f;F)=1/2$. Suppose that $ |A|\geq 2 $. Using Lemma~\ref{lem: fair mechanism} we get,
	\begin{flalign*}
	\EE[P(x)]_{x\sim\cl{M}_f(F)}&> \dfrac{1}{2}\sum_{i=2}^{k}P(a_i)\log_2 \dfrac{P(a_i)}{P(a_{i-1})}+\dfrac{1}{2}P(a_1)\log_2\dfrac{2P(a_1)}{P^*}\\
	&=\dfrac{1}{2\ln2}\sum_{i=2}^{k}P(a_i)\int_{P(a_{i-1})}^{P(a_i)}\dfrac{dz}{z}+\dfrac{1}{2}P(a_1)\log_2\dfrac{2P(a_1)}{P^*}\\
	\geq&\dfrac{1}{2\ln2}\sum_{i=2}^{k}\int_{P(a_{i-1})}^{P(a_i)}dz+\dfrac{1}{2}P(a_1)\log_2\dfrac{2P(a_1)}{P^*}\\
	=&\dfrac{1}{2\ln2}\sum_{i=2}^{k}(P(a_i)-P(a_{i-1}))+\dfrac{1}{2}P(a_1)\log_2\dfrac{2P(a_1)}{P^*}\\
	=&\dfrac{1}{2\ln2}\left  (P^*-P(a_1)+P(a_1)\ln\dfrac{2P(a_1)}{P^*}\right )
	=\dfrac{P^*}{2\ln2}\left(1+\dfrac{P(a_1)}{P^*}\ln\dfrac{2P(a_1)}{eP^*} \right).
	\end{flalign*}
	Since the function $ z\ln(2z/e) $ is monotone increasing in the interval [0.5,1], we get that
	\begin{flalign*}
	\EE[P(x)]_{x\sim\cl{M}_f(F)}\geq \dfrac{P^*}{2\ln2}\left (1+\dfrac{1}{2}\ln\dfrac{1}{e}\right )= \dfrac{P^*}{4\ln2},
	\end{flalign*}
	as claimed. \\
	It remains to show that $ \cl{M}_f $ is fair (i.e., monotone and proportional). For any two vertices $ x,y\in R(F) $ with $ P(x)> P(y) $, it must be that $ \cl{M}_f(y)=0 $, hence it is monotone. To see that it is proportional, define for any $ \epsilon>0 $ the mechanism $ \cl{M}_\epsilon $ induced by the following roots-mechanism:
	\[ \forall r\in R,\; \cl{M}_\epsilon(r)=\cl{M}_f(r_1)\epsilon^{P^*-P(r)}.
	\]
	It is easy to see that $ \cl{M}_\epsilon\to\cl{M}_f $ when $ \epsilon\to  0 $. Notice that for any two roots $ r,r' $, $ \dfrac{\cl{M}_\epsilon(r)}{\cl{M}_\epsilon(r')}=\epsilon^{P(r')-P(r)} $. Since this relation depends only on $ P(r),P(r') $, $ \cl{M}_\epsilon $ is proportional; hence, $ \cl{M}_f $ is fair in limit.
\end{proof}

Both Examples~\ref{exm: fair 1} and \ref{exm: fair 2} show that $ \cl{M}_f $ is not exact. Our next mechanism is an exact mechanism, but not fair.

\subsection{An exact mechanism}\label{sec: BB mechanism}
Consider the forest in Figure~\ref{fig: balanced 1}. Denote $ T_1, T_2, T_3 $ for the largest, second largest and smallest trees, respectively. The vertical dotted line denotes the middle of $ T_1 $ (which is $ P^*/2=5 $). 
	\begin{figure}[h!]
		
		\centering
		\begin{tikzpicture}[scale=1, line width=0.45mm,
		every node/.style={draw,inner sep=0pt, minimum size=15pt}]
		
		\node[diamond,label={below:$2$}](a1) at (40pt,0pt){$ a_0 $};
		\node[diamond,label={below:$4$}](a2) at (120pt,0pt){$ a_1 $};
		\node[diamond,label={below:$ 2 $}](a3) at (160pt,0pt){$ a_2 $};
		\node[diamond,label={below:$ 2 $}](a4) at (200pt,0pt){$ a_3 $};
		
		\draw[dashed] (0pt,0pt) -- (a1);
		\draw[->] (a1) -- (a2);
		\draw[->] (a2) -- (a3);
		\draw[->] (a3) -- (a4);
		
		\node[circle,label={below:}](b1) at (20pt,-40pt){};	
		\node[diamond,label={below:$3$}](b2) at (80pt,-40pt){};	
		\node[diamond,label={below:$3$}](b3) at (140pt,-40pt){$ b_1 $};	
		\node[diamond,label={below:$2$}](b4) at (180pt,-40pt){$ b_2 $};			
		
		\draw[dashed] (0pt,-40pt) -- (b1);
		\draw[->] (b1) -- (b2);
		\draw[->] (b2) -- (b3);
		\draw[->] (b3) -- (b4);
		
		\node[diamond,label={below:8}](c1) at (160pt,-80pt){$ c_1 $};			
		\draw[dashed] (0pt,-80pt) -- (c1);

		\draw[dotted]  (100pt,10pt) -- (100pt,-90pt);
		\end{tikzpicture}
		\caption{ }
		\label{fig: balanced 1}
	\end{figure}

Consider the following mechanism, $ \cl{M}' $, which is exact but not IC. The support of $ \cl{M}' $ are all the vertices which are to the right of the dotted line (namely, $ a_1, a_2, a_3, b_1, b_2, c_1$). The idea is to take the interval $ (\tfrac{1}{2}P^*,P^*]=(5,10] $ and partition it into subintervals. Each vertex $ x\in\supp(\cl{M}') $ gets ownership on the subinterval $ (\max\{\tfrac{1}{2}P^*, \underline{P}(x)\} ,P(x)] $. It is possible that several vertices from different trees will claim ownership of a subinterval, in this case the ownership on this subinterval is equally shared between them. In the example of Figure~\ref{fig: balanced 1}, $ a_1 $ owns the subinterval $ (\tfrac{1}{2}P^*, P(a_1)]=(5,6]  $ in $ T_1 $; $ b_1 $ owns the subinterval $ (5,7] $ in $ T_2 $; and $ c_1 $ owns the subinterval $ (5,8] $ in $ T_3 $. Thus, the subinterval $ (5,6] $ is co-owned by the three of them and each gets a share of 1/3 of this subinterval. Similarly, $ (6,7] $ is shared between $ a_2,b_1,c_1 $; $ (7,8] $ is shared between $ a_2,b_2,c_1 $. The subinterval $ (8,9] $ is shared by only two vertices (because $ P(c_1)\leq 8 $): $ a_3 $ and $ b_2 $. Finally, the subinterval $ (9,10] $ in owned by $ a_3 $ alone. Now each subinteral $ (\alpha,\beta] $ divides a probability of $ \log_2\dfrac{\beta}{\alpha} $ among the partners who own it. We get the following distribution:
\begin{flalign*}
&\cl{M}'(a_1)=\dfrac{1}{3}\log_2\dfrac{6}{5} 
&\cl{M}'(a_2)=\dfrac{1}{3}\log_2\dfrac{7}{6}+\dfrac{1}{3}\log_2\dfrac{8}{7}\\
&\cl{M}'(a_3)=\dfrac{1}{2}\log_2\dfrac{9}{8}+\log_2\dfrac{10}{9}
&\cl{M}'(b_1)=\dfrac{1}{3}\log_2\dfrac{7}{5}\\
&\cl{M}'(b_2)=\dfrac{1}{3}\log_2\dfrac{8}{7}+\dfrac{1}{2}\log_2\dfrac{9}{8}
&\cl{M}'(c_1)=\dfrac{1}{3}\log_2\dfrac{8}{5}
\end{flalign*}
The sum of the probabilities distributed by the subintervals is the probability that would be distributed by the whole interval $ (\tfrac{1}{2}P^*,P^*] $ which is precisely $ \log_2\dfrac{P^*}{\tfrac{1}{2}P^*}=1 $. This shows that this mechanism is exact. The problem, as mentioned, is that it is not IC. To see that, notice that in the forest $ F_{a_1} $, the middle line drops to 4.5 (since now $ P^*(F_{a_1})=P(b_2)=9 $). This means that $ a_1 $ owns a larger subinterval in $ F_{a_1} $, which means higher probability. Similar problem can be for vertices which are in $ T_1 $ right below the middle (but above $ \tfrac{1}{3}P^* $)---removing their out-edge might drop the middle so that they are entitled to some positive probability. If we want our mechanism to be IC we must award them these additional probabilities. The compensation will come from the probability of $ r_1 $ ($ a_3 $ in our example): this vertex will not get his ``fair share'' of the interval but instead will get a probability which completes the total distribution to 1. Of course, we will have to prove that these corrections do not sum up to more than 1 themselves (or equivalently, that the ``complementary probability'' is never negative).\\
We turn to the formal definition of our exact mechanism, $ \cl{M}_b $. For any real number $ z>0 $, we define $ u(z)=|\{r\in R: P(r)\geq z \}| $.
As with the previous mechanism, we define $ \cl{M}_b $ by defining it only for roots:
\begin{flalign*}
\forall i>1,\;\cl{M}_b(r_i)&=\begin{cases}
\dfrac{1}{\ln 2}\int_{\max\{\underline{P}{(r_i)},\tfrac{1}{2}P^* \}}^{P(r_i)}\dfrac{dz}{zu(z)},&P(r_i)>\tfrac{1}{2}P^*,\\
0,&P(r_i)\leq\tfrac{1}{2}P^*.
\end{cases}\\
\cl{M}_b(r_1)&=1-\cl{M}_b(N\backslash\{r_1\}).
\end{flalign*}

We illustrate the workings of this mechanism for the forest in Figure~\ref{fig: balanced 1}. 
\begin{example}\label{exm: balanced 1}
	All the vertices (in all trees) to the right of the middle dotted line get a positive probability. The vertices to the left of this line in trees other than $ T_1 $ get a zero probability. The vertex $ a_0 $ gets a zero probability as well since in the graph $ F_{a_0} $ the middle drops down a little, but it is still above $ P(a_0) $. It is clear from the diagram that
	\begin{flalign*}
	u(z)=\begin{cases}
	3,&z\leq 8,\\
	2, &8<z\leq 9,\\
	1, &9<z\leq 10.
	\end{cases}
	\end{flalign*}
	The vertices $ a_2,b_1,b_2,c_1 $ get the same probabilities as we calculated for the mechanism $ \cl{M}' $.	To find the probability of $ a_1 $, consider the forest $ F_{a_1} $.

Here $ P^*(F_{a_1})=P(b_2)=9 $; hence the middle dropped to $ 4.5 $. We can now calculate,
\[\cl{M}_b(a_1)=\dfrac{1}{3}\log_2\dfrac{P(a_1)}{\tfrac{1}{2}P^*(F_{a_1})}=\dfrac{1}{3}\log_2\dfrac{6}{4.5}. \]
Finally, the probability of $ a_3 $ is the remaining probability which completes the total distribution to 1. Since $ \cl{M}_b,\cl{M}_b' $ are exact and the only differences between them are the probabilities of $ a_1 $ and $ a_3 $, we get that 
\begin{flalign*}
\cl{M}_b(a_3)=\cl{M}'(a_3)-(\cl{M}_b(a_1)-\cl{M}'(a_1))=\dfrac{1}{2}\log_2\dfrac{9}{8}+\log_2\dfrac{10}{9}-\dfrac{1}{3}\log_2\dfrac{6}{4.5}+\dfrac{1}{3}\log_2\dfrac{6}{5}.
\end{flalign*}
\end{example}
If we just wanted to prove that $ \cl{M}_b(F) $ is well-defined, it was enough to show that $ \cl{M}_b(a_1)-\cl{M}_b'(a_1)\leq\cl{M}_b'(a_3) $. Notice also that if the two largest trees were of the same order (i.e., $ P(b_2)=P(a_3) $), then we would have $ P^*(F_{a_1})=P(b_2)=P^*(F) $ and in this case $ \cl{M}_b(a_1)=\cl{M}_b'(a_1) $ (because the middle line, and hence the subinterval of $ a_1 $, does not change when we remove the out-edge of $ a_1 $). This implies that when the two largest trees are of the same order, $\cl{M}_b(r_1)=\cl{M}_b'(r_1) $ (i.e., $ r_1 $ gets his ``fair share''), and all we need to show is that when we lower down the order of the second largest tree, the compensations for the vertices in $ T_1 $ near the middle are never larger than the probability of $ r_1 $ under $ \cl{M}_b' $. The following is our formal claim for the mechanism $ \cl{M}_b $. The proof is based on the above observation. Due to the length and technical nature of the proof, we postpone it to the appendix.
\begin{thm}\label{thm: balanced mechanism}
	Mechanism $ \cl{M}_b $ is well-defined, IC, exact, and with quality $ Q(\cl{M}_b)\geq 1/3$.
\end{thm}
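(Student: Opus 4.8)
The plan is to prove the four assertions in the order: incentive‑compatibility, then quality, then (jointly) well‑definedness and exactness, the last being the real work. Incentive‑compatibility is immediate: $\cl{M}_b$ is by construction the IC extension of a roots‑distribution, so by Corollary~\ref{cor: roots mecchanism} it is IC as soon as this extension is a legitimate mechanism --- i.e.\ in every forest all probabilities are nonnegative and their sum is at most $1$. That single \emph{budget inequality}, $\sum_{x\ne r_1}\cl{M}_b(x;F)\le 1$ for every $F$, simultaneously yields well‑definedness ($\cl{M}_b(r_1;F)=1-\sum_{x\ne r_1}\cl{M}_b(x;F)\in[0,1]$, while every other value is a nonnegative integral bounded by $\tfrac1{\ln 2}\int_{\frac12 P^*}^{P^*}\tfrac{dz}{z}=1$) and exactness (by the very definition of $\cl{M}_b(r_1)$ the total is then exactly $1$). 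I would prove the budget inequality by induction on the number of edges of $F$, which is legitimate because the recursive clause $\cl{M}_b(x;F)=\cl{M}_b(x;F_x)$ is only invoked at non‑root vertices $x$, for which $F_x$ has strictly fewer edges.

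For quality, first observe that a vertex $x$ gets positive mass only if, in $F_x$, it is a root carrying positive mass, i.e.\ either $x=r_1(F_x)$, or $x=r_i(F_x)$ with $i>1$ and $P(x)>\tfrac12 P^*(F_x)$. If $x\notin T(r_1)$ then $T(r_1)$ survives intact in $F_x$, so $P^*(F_x)=P^*$ and in either case $P(x)>\tfrac12 P^*$. If $x\in T(r_1)$ then $F_x$ contains the tree of order $P^*-P(x)$, so $P^*(F_x)\ge P^*-P(x)$; the case $x=r_i(F_x)$, $i>1$, forces $P(x)>\tfrac12(P^*-P(x))$, i.e.\ $P(x)>\tfrac13 P^*$, and the case $x=r_1(F_x)$ forces $P(x)\ge P^*-P(x)$, i.e.\ $P(x)\ge\tfrac12 P^*$. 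Hence every vertex in $\supp\cl{M}_b(F)$ has progeny at least $\tfrac13 P^*$, and exactness gives $\EE[P(x)]_{x\sim\cl{M}_b(F)}=\sum_x\cl{M}_b(x)P(x)\ge\tfrac13 P^*\sum_x\cl{M}_b(x)=\tfrac13 P^*$, as desired.

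The heart is the budget inequality, and the main device is the \emph{fair‑share mechanism} $\cl{M}'$ of the worked example, read as defined on \emph{every} vertex: $\cl{M}'(x;F)=\tfrac1{\ln2}\int_{\max\{\underline P(x),\frac12 P^*\}}^{P(x)}\tfrac{dz}{z\,u(z)}$ when $P(x)>\tfrac12 P^*$, and $0$ otherwise. Using Observation~\ref{obs: progeny}, inside each tree the vertices of progeny $>\tfrac12 P^*$ lie on a path whose ownership intervals tile $(\tfrac12 P^*,P(\text{root})]$, so $\sum_x\cl{M}'(x;F)=\tfrac1{\ln2}\int_{\frac12 P^*}^{P^*}\tfrac{dz}{z}=1$, i.e.\ $\cl{M}'$ is exact. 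Next I would compare $\cl{M}_b$ to $\cl{M}'$ vertex by vertex: for a root $r_i$, $i>1$, one has $F_{r_i}=F$ and $\cl{M}_b(r_i)=\cl{M}'(r_i)$; for a non‑root $x\notin T(r_1)$, deleting its out‑edge keeps $P^*$ (hence the ``middle line'' $\tfrac12 P^*$) fixed and merely splits one other tree in two, so $u_{F_x}(z)=u(z)+\II[\,P(\text{root of that tree})-P(x)\ge z\,]\ge u(z)$ on the integration range of $x$, whence $\cl{M}_b(x;F)\le\cl{M}'(x;F)$. Writing $T_1=T(r_1)$ and combining these with exactness of $\cl{M}'$,
\[
\sum_{x\ne r_1}\cl{M}_b(x;F)\ \le\ \sum_{x\notin T_1}\cl{M}'(x;F)\ +\ \sum_{\substack{x\in T_1\\ x\ne r_1}}\cl{M}_b(x;F)\ =\ 1-\sum_{x\in T_1}\cl{M}'(x;F)\ +\ \sum_{\substack{x\in T_1\\ x\ne r_1}}\cl{M}_b(x;F),
\]
so the budget inequality reduces to the \emph{compensation bound}
\[
\sum_{\substack{x\in T_1\\ x\ne r_1}}\cl{M}_b(x;F)\ \le\ \sum_{x\in T_1}\cl{M}'(x;F)\ =\ \tfrac1{\ln2}\int_{\frac12 P^*}^{P^*}\frac{dz}{z\,u(z)}.
\]

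The compensation bound is the step I expect to be the main obstacle. For a non‑root $x\in T_1$ with positive mass, $\cl{M}_b(x;F)=\tfrac1{\ln2}\int_{I_x}\tfrac{dz}{z\,u_{F_x}(z)}$ with $I_x=(\max\{\underline P(x),\tfrac12 P^*(F_x)\},P(x)]$ (or, if $x=r_1(F_x)$, a complementary probability controlled by the induction); the ``excess'' over $\cl{M}'(x;F)$ arises because deleting $x$'s out‑edge may drop $P^*(F_x)$ --- hence the middle line --- below $\tfrac12 P^*$, so $I_x$ reaches lower than $\tfrac12 P^*$. Since on $I_x$ one has $u_{F_x}(z)=u(z)+\II[\,P^*-P(x)\ge z\,]\ge u(z)$, it suffices to bound $\tfrac1{\ln2}\sum_x\int_{I_x}\tfrac{dz}{z\,u(z)}$. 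Two structural facts make this tractable: (a) by the quality argument every contributing $x$ has $P(x)>\tfrac13 P^*$, and pairwise‑incomparable vertices of $T_1$ of progeny $>\tfrac13 P^*$ number at most two (three would have disjoint subtrees of total order $>P^*$), so the contributing vertices form a path that branches at most once just below $r_1$; (b) along any path, consecutive vertices have adjacent ownership intervals, so their $I_x$'s telescope into essentially one integral. The remaining work --- the ``length and technical nature'' the authors flag --- is to show that the downward overhang of $\bigcup_x I_x$ below $\tfrac12 P^*$ (and, in the two‑branch case, the overlap of the two interval families, where $u_{F_x}$ is correspondingly larger) is always covered by the top part of $(\tfrac12 P^*,P^*]$ that $\cl{M}'$ assigns to $r_1$ itself, i.e.\ $r_1$'s ``unclaimed fair share'' dominates the compensations. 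The guiding picture, as the paper's remark indicates, is that when the second‑largest tree has order $P^*$ no compensation occurs at all, and the inequality becomes tight only in the limit where $r_1$ has two children of progeny close to $\tfrac12 P^*$ (so $\tfrac13 P^*$ is the operative threshold); the recursive case $x=r_1(F_x)$ is absorbed by the induction on edges.
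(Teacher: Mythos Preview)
Your reductions up to the compensation bound are correct and cleanly stated: the quality argument, the inequality $\cl{M}_b(x)\le\cl{M}'(x)$ for $x\notin T_1$ (via $u_{F_x}\ge u$ on the same integration range, since $P^*(F_x)=P^*$ there), and the consequent reduction of the budget inequality to
\[
\sum_{x\in T_1\setminus\{r_1\}}\cl{M}_b(x;F)\ \le\ \sum_{x\in T_1}\cl{M}'(x;F)
\]
are all sound. The gap is exactly at the step you flag as ``the remaining work'': you do not prove the compensation bound, and the clause ``the recursive case $x=r_1(F_x)$ is absorbed by the induction on edges'' does not deliver what you need. Induction on edges tells you that $\cl{M}_b$ is well-defined on $F_x$, hence that $\cl{M}_b(x;F_x)=\cl{M}_b(r_1(F_x);F_x)\ge 0$; but for the compensation bound you need an \emph{upper} bound on $\cl{M}_b(x;F_x)=1-\sum_{y\ne x}\cl{M}_b(y;F_x)$, i.e.\ a \emph{lower} bound on that sum, and neither the budget inequality nor the compensation bound for $F_x$ provides one. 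Relatedly, your telescoping picture with intervals $I_x$ applies only to the vertices in $T_1$ for which the integral formula is the operative definition; the vertices of $A\setminus\{r_1\}$ sit on the very same path but carry the complementary probability, so their contribution cannot be read off from $I_x$ at all. Without a separate argument controlling $\cl{M}_b(a_i;F_{a_i})$ from above for each $a_i\in A$, the sketch does not close.

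The paper's proof takes a genuinely different route. Rather than compare statically to $\cl{M}'$, it embeds $F$ in a one-parameter family $F(p)$ obtained by replacing $T(r_2)$ with a $p$-star, starts at $p=P^*$ (where no compensations occur, so $\cl{M}_b(r_1;F(P^*))$ equals $r_1$'s fair share), and integrates the derivative $\Delta(r_1)=-\sum_{x\ne r_1}\Delta(x)$ as $p$ decreases to $P(r_2)$. The recursion is handled differently: for $a_i\in A$, the derivative $\Delta(a_i)$ is computed inside $F_{a_i}(p)$, where $a_i$ is the top root and the analysis of the previous regime applies verbatim; crucially the gain $\Delta(a_i)$ is then offset against the loss $\Delta(r_2)$ at the \emph{same} value of $p$, so one never needs an absolute upper bound on $\cl{M}_b(a_i)$. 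This derivative bookkeeping is what drives the three-regime case analysis ($p>\underline{P}(r_1)$; $\tfrac12 P^*<p\le\underline{P}(r_1)$; $p\le\tfrac12 P^*$) and is the mechanism your static approach is missing. Your framework might be salvageable with a two-sided inductive hypothesis on $\cl{M}_b(r_1)$, but as written the proposal stops short of the main estimate.
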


We end this section by showing that $ \cl{M}_b $ is not proportional, and hence not fair. Let $ F $ be a forest with a $ k $-star with centre $ c_1 $ and a $ (k-1) $-star with centre $ c_2 $. Assume that $ n=3k $. The probabilities of $ c_1,c_2 $ are
\begin{flalign*}
\cl{M}_b(c_2)&=\dfrac{1}{2}\log_2\dfrac{k-1}{\tfrac{1}{2}k}=\dfrac{1}{2}(1-\log_2\dfrac{k}{k-1}),\\
\cl{M}_b(c_1)&=1-\cl{M}_b(c_2)=\dfrac{1}{2}(1+\log_2\dfrac{k}{k-1}).
\end{flalign*}
Now, if we add another $ (k-1) $-star (adding such a star does not involve any of the vertices in the trees of $ c_1,c_2 $), then
\begin{flalign*}
\cl{M}_b(c_2)&=\dfrac{1}{3}\log_2\dfrac{k-1}{\tfrac{1}{2}k}=\dfrac{1}{3}(1-\log_2\dfrac{k}{k-1}).\\
\cl{M}_b(c_1)&=1-2\cl{M}_b(c_2)=\dfrac{1}{3}(1+2\log_2\dfrac{k}{k-1}).
\end{flalign*}
We see that the ratio $ \dfrac{\cl{M}_b(c_1)}{\cl{M}_b(c_2)} $ has changed, which means that $ \cl{M}_b $ is not proportional.
\section{An impossibility}\label{sec: imposibility}

In this section we prove the impossibility theorem stated below. 
\begin{thm}\label{thm: Fair impossibility}
	Let $ \cl{M} $ be an IC, fair and exact mechanism. Then $ Q(\cl{M})=0 $.
\end{thm}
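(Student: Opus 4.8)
The plan is to assume toward a contradiction that $Q(\cl{M})=\delta>0$, extract a weight function from the fairness hypothesis, use a family of ``one big star plus many small trees'' to force these weights to grow super-linearly, and finally exploit this growth together with incentive-compatibility and exactness to reach a contradiction. Throughout I would use the symmetrisation remark preceding this section to assume $\cl{M}$ is symmetric.

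First I would unpack ``fair in limit'': fix positive fair mechanisms $\cl{M}_1,\cl{M}_2,\dots$ with $\cl{M}_i\to\cl{M}$ pointwise, and for each $i$ a non-decreasing $g_i\colon\NN\to\RR_{>0}$ witnessing the proportionality of $\cl{M}_i$, normalised so $g_i(1)=1$; thus $\cl{M}_i(x;F)/\cl{M}_i(y;F)=g_i(P(x))/g_i(P(y))$ for any two roots $x,y$ of any forest. Passing to a subsequence (a diagonal argument over all of $\NN$ and all pairs), I may assume $g_i(p)\to g(p)\in[1,+\infty]$ and that each ratio $g_i(p)/g_i(q)$ converges to some $\sigma(p,q)\in[0,+\infty]$, with $\sigma$ coordinatewise monotone and $\sigma(p,q)\le 1$ whenever $p\le q$. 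Taking limits in $\cl{M}_i(x;F)=\cl{M}_i(y;F)\,g_i(P(x))/g_i(P(y))$ gives the basic tool: for any forest $F$ and roots $x,y$ with $\sigma(P(x),P(y))<\infty$, $\cl{M}(x;F)=\sigma(P(x),P(y))\,\cl{M}(y;F)$; in particular $\cl{M}(x;F)=0$ whenever $F$ has another root $y$ with $\sigma(P(x),P(y))=0$.

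Next I would force super-linear growth of the $g_i$. Fix integers $p\ge 1$ and $n>p/\delta$; for each $m$ let $F_m$ be one $n$-star (centre $c$) together with $m$ disjoint copies of the $p$-star. Every vertex $\neq c$ has progeny $\le p<n=P^*(F_m)$, and since $\cl{M}$ is exact one computes $\EE[P(x)]_{x\sim\cl{M}(F_m)}=1+(n-1)\cl{M}(c;F_m)+(p-1)\sum_j\cl{M}(\mathrm{centre}_j;F_m)\le p+(n-p)\cl{M}(c;F_m)$. As $\liminf_m Q(\cl{M};F_m)\ge\delta$ (valid once $|N(F_m)|$ is large), this yields $\cl{M}(c;F_m)\ge\tfrac12\gamma$ for all large $m$, where $\gamma:=\tfrac{\delta n-p}{n-p}>0$. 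On the other hand $c$ and the $m$ star-centres are roots of progeny $n$ and $p$, so by the basic tool $\cl{M}(c;F_m)\bigl(1+m\,\sigma(p,n)\bigr)=\cl{M}(c;F_m)+\sum_j\cl{M}(\mathrm{centre}_j;F_m)\le 1$. Hence $\tfrac12\gamma\bigl(1+m\,\sigma(p,n)\bigr)\le 1$ for all large $m$, which is impossible unless $\sigma(p,n)=0$. Thus $\sigma(p,n)=0$ for all $n>p/\delta$; in particular $g(n)=+\infty$ for all large $n$, so the $g_i$ grow faster than any linear function, and $\cl{M}$ behaves on roots like the mechanism $\cl{M}_f$ of Section~\ref{sec: log scheme} (a vanishing fraction of the root-mass escapes $r_1$).

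Finally I would push this through incentive-compatibility: if $v$ is a vertex of $F$ with $p:=P(v;F)<\tfrac{\delta}{1+\delta}P^*(F)$, then in $F_v$ the vertex $v$ is a root of progeny $p$ while some root of $F_v$ has progeny $\ge P^*(F)-p>p/\delta$, so by the basic tool $\cl{M}(v;F)=\cl{M}(v;F_v)=0$. Hence $\cl{M}$ concentrates all of its mass on vertices of progeny $\ge\tfrac{\delta}{1+\delta}P^*(F)$. The remaining --- and, I expect, hardest --- step is to turn this concentration into a contradiction with exactness: I would look for a nested family of trees $T^{(1)}\subset T^{(2)}\subset\cdots$, each built from the previous by attaching below a fresh root a ``mirror'' subtree of a carefully chosen fractional order, so that along their common heavy path the recursion $\cl{M}(v;F)=\cl{M}(v;F_v)$ with $v$ a root of $F_v$ pins down the probabilities of the heavy vertices; one then argues that these probabilities --- which by exactness and the concentration property must sum to exactly $1$ in every $T^{(k)}$ --- obey a system of linear relations with no solution compatible with $Q(\cl{M})=\delta>0$. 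Making this final counting argument precise is the technical crux; everything before it is fairly routine limiting analysis.
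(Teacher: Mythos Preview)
Your first two steps --- extracting a weight function from fairness and forcing it to grow super-linearly via a ``one big star plus many small stars'' forest --- parallel the paper's Lemma~\ref{lem: fair-FG} and Lemma~\ref{lem: convexity}. One technical wrinkle: proportionality (Definition~\ref{dfn: fairness}) is stated for a fixed vertex set $N$, so the weight function $g_i$ you extract should a priori depend on $|N|$. Your growth argument sends $m\to\infty$, hence $|N|=n+mp\to\infty$, while treating $\sigma(p,n)$ as a single number independent of $|N|$; this needs justification (the paper handles the corresponding issue by working with $f_n$ and letting the number of $k$-stars scale with $n$ itself).

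The real gap is the final step. Your ``concentration'' conclusion --- that $\cl{M}(v;F)=0$ whenever $P(v)<\tfrac{\delta}{1+\delta}P^*(F)$ --- is \emph{not} a contradiction: both mechanisms $\cl{M}_f$ and $\cl{M}_b$ of Section~\ref{sec: schemes} have precisely this property (with thresholds $\tfrac{1}{2}P^*$ and $\tfrac{1}{3}P^*$) and are perfectly well-defined. Concentration alone cannot force over-distribution, because the heavy vertices may still share the unit mass consistently. Your proposed ``nested family $T^{(1)}\subset T^{(2)}\subset\cdots$'' is not promising either: nesting changes $|N|$ and hence the mechanism, so the ``system of linear relations'' you allude to does not close up across levels.

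What the paper actually does is construct a single specific forest (Lemma~\ref{lem: overpaying}): four stars of sizes $b,b,a,a$ chained in a path, with $b\ge 2a$. Computing $\cl{M}(x_2)$ and $\cl{M}(x_3)$ via IC through six intermediate sub-forests, one finds each is forced close to $\tfrac{1}{2}$ for incompatible reasons, so their sum exceeds $1$. Crucially this requires more than $\sigma=0$: one needs a scale where $f_n(a)/f_n(2a)\to 0$ (so certain terms vanish) while $k=f_n(b)/f_n(2a)$ stays uniformly bounded and $m=f_n(a+b)/f_n(2a)\ge 7k^2$ (so others do not). Locating such $a,b$ is a pigeonhole over geometric scales (proof of Theorem~\ref{thm: FG impossibility}). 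This multi-scale analysis is the heart of the argument and is absent from your proposal; calling it ``the technical crux'' is accurate, but without it you do not have a proof.
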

Instead of dealing directly with the property of fairness, we prove that this property can be replaced with a more mathematically convenient property of being function-generated (Definition~\ref{dfn: funtion-generated}, Lemma~\ref{lem: fair-FG}). We then prove a couple of asymptotic Lemmata (Lemma~\ref{lem: convexity} and Lemma~\ref{lem: overpaying}) that together can be used to show that for any function-generated mechanism with a positive quality, we can find a forest for which the mechanism is distributing probabilities that sum up to more than one. This is the path with take in the proof of Theorem~\ref{thm: FG impossibility}, which then immediately implies Theorem~\ref{thm: Fair impossibility}. 
\begin{defn}\label{dfn: funtion-generated}
	An exact, IC mechanism $ \cl{M} $ is \emph{function-generated} if there is a series of positive functions $ f_n:\NN\to \RR_+ $ such that for any $ F\in\cl{F}^N $ with $ |N|=n $ and $ r\in R(F) $,
	\[ \cl{M}(r)=\dfrac{f_n(P(r;F))}{\sum_{r'\in R(F)}f_n(P(r';F))}\left (1-\sum_{x\in N\backslash R(F) }\cl{M}(x;F)\right ). \]
\end{defn}
In other words, in a function-generated mechanism, the excess probability (i.e., the probability left after distributing what is due by the IC demand) is linearly distributed between the roots according to $ f_n(P(\cdot)) $. 
\begin{lem}\label{lem: fair-FG}
	Let $ \cl{M} $ be an IC, fair and exact mechanism. Then there is a function-generated mechanism $ \cl{M}' $ such that $ \cl{M}(F)=\cl{M}'(F) $, for every forest $ F $ with at least three roots. 
\end{lem}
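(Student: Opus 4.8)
The plan is to show that, for each vertex-set size $n$, fairness forces every root to receive a fixed ``weight'' of its progeny applied to whatever probability is left over once the incentive-compatible demands of the non-roots are met --- which is exactly the function-generated form. Throughout write $E(F):=1-\sum_{x\in N\setminus R(F)}\cl M(x;F)$ for this leftover root budget, so that exactness gives $\sum_{r\in R(F)}\cl M(r;F)=E(F)$, and recall from Corollary~\ref{cor: roots mecchanism} that both $\cl M$ and any function-generated mechanism are determined by their restrictions to roots.

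First I would extract a ratio function and show it is multiplicative. By proportionality, for each $n$ there is a function $\rho_n$ on pairs of positive integers such that $\cl M(x;F)=\rho_n(P(x),P(y))\,\cl M(y;F)$ for every $F\in\cl F^N$ with $|N|=n$ and all $x,y\in R(F)$; this is a genuine positive real because fairness forces $\cl M$ to be positive (if $\cl M$ is only fair in limit, run the argument along the positive fair approximants and take limits at the end), and monotonicity plus the definition give $\rho_n(a,a)=1$ and $\rho_n(a,b)\rho_n(b,a)=1$. The key point is the cocycle identity $\rho_n(a,b)\rho_n(b,c)=\rho_n(a,c)$ for $a+b+c\le n$: apply proportionality inside the forest on $n$ vertices built from stars of orders $a,b,c$ (padded by one more star on the remaining $n-a-b-c$ vertices if that number is positive), which has at least three roots, three of them of progenies $a,b,c$, and cancel the corresponding positive $\cl M$-values. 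Fixing the isolated vertex (progeny $1$) as a reference, set $f_n(m):=\rho_n(m,1)$ for $m\le n-2$, extended arbitrarily to positive values on the rest of $\NN$; then $f_n>0$, $f_n(1)=1$, and whenever $a+b\le n-1$ the cocycle gives $\rho_n(a,b)=\rho_n(a,1)/\rho_n(b,1)=f_n(a)/f_n(b)$. This is precisely where the ``at least three roots'' hypothesis is used: in a forest with $k\ge 3$ roots every root has progeny at most $n-2$ and any two root-progenies sum to at most $n-1$, so all ratios we need factor through $f_n$, whereas a two-root forest would force $a+b=n$, just outside that range.

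Next I would assemble the formula. Let $\cl M'$ be the function-generated mechanism with weights $(f_n)$, which is IC and exact by definition. For a forest $F$ with $k\ge 3$ roots, fix a root $r'$; the previous paragraph gives $\cl M(r;F)=\frac{f_n(P(r))}{f_n(P(r'))}\cl M(r';F)$ for every root $r$, and summing over $r$ together with $\sum_{r}\cl M(r;F)=E(F)$ yields $\cl M(r';F)=\frac{f_n(P(r'))}{\sum_{r\in R(F)}f_n(P(r))}\,E(F)$ --- the same closed form that defines $\cl M'$, except with $E(F)$ evaluated from $\cl M$. To conclude $\cl M(F)=\cl M'(F)$ for all $F$ with at least three roots I would induct on the number of edges: the base case (no edges, all progenies $1$) is immediate, and if $x$ is a non-root of such an $F$ then $F_x$ still has at least three roots and one fewer edge, so incentive-compatibility and the induction hypothesis give $\cl M(x;F)=\cl M(x;F_x)=\cl M'(x;F_x)=\cl M'(x;F)$. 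Hence $\cl M$ and $\cl M'$ have the same non-root probabilities on $F$, so $E(F)$ agrees for both, and then the displayed formula (valid verbatim for $\cl M'$ by definition and for $\cl M$ by the above) forces agreement on the roots as well.

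The only real obstacle is bookkeeping rather than depth: getting the ratios well-defined (positivity of $\cl M$, and pinning down the exact progeny range in which $\rho_n$ factors through $f_n$) is exactly what the three-roots hypothesis is there to handle, and once the cocycle identity is established the rest is routine.
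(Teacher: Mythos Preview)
Your proposal is correct and follows essentially the same route as the paper's proof: both define the weight $f_n(k)$ as the ratio of the centre of a $k$-star to an isolated vertex, then run an induction on the number of edges in which IC pins down the non-roots (since $F_x$ has one extra root), exactness equalizes the total root mass, and proportionality (via an auxiliary forest containing stars of the two relevant sizes) forces the root ratios to be $f_n(P(x))/f_n(P(y))$. Your cocycle formulation of the ratio identity is a slightly more abstract packaging of the paper's concrete comparison through the forests $S_k$ and $S_{k,m}$, but the argument is the same.
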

\begin{proof}
	Fix $ n $. For any $ 2\leq k\leq n-1 $, let $ S_k $ be the forest with a $ k $-star and $ n-k-1 $ isolated vertices. Let $ c_k $ be the centre vertex of the star and let $ z_k $ be an isolated vertex in $ S_k $. We define the function $ f=f_n $ in the following manner. 
	\begin{flalign*}
	f(1)=1;\\
	\forall 2\leq k\leq n-1, f(k)=\dfrac{\cl{M}(c_k;S_k)}{\cl{M}(z_k;S_k)}.
	\end{flalign*}
	Since $ \cl{M} $ is fair, $ f $ is well-defined. Let $ \cl{M}' $ be the IC mechanism generated by $ f $. We will prove the claim using induction on $ |E(F)| $. The claim is clearly true for the empty forest. Suppose it is true for all forests with $ e-1 $ edges and let $ F $ be a forest with $ e $ edges. Since both $ \cl{M},\cl{M}' $ are IC, the induction hypothesis implies that $ \cl{M}(x)=\cl{M}'(x) $ for any $ x\in N\backslash R $.\footnote{Since $ F_x $ has one less edge and more root than $ F $, the induction applies.} This means that $ \sum_{x\in N\backslash R}\cl{M}(x)=\sum_{x\in N\backslash R}\cl{M}'(x) $, and since both mechanisms are also exact, we get that $ \sum_{x\in R}\cl{M}(x)=\sum_{x\in R}\cl{M}'(x) $. Thus, it is enough to show that $ \forall x,y\in R $, $ \dfrac{\cl{M}(x)}{\cl{M}(y)}=\dfrac{\cl{M}'(x)}{\cl{M}'(y)} $. Denote $ k=P(x), m=P(y) $. Since we assume that there are at least three roots, $ k+m<n $. Let $ S_{k,m} $ be the forest with one $ k $-star, one $ m $-star, and $ n-k-m-2 $ isolated vertices. Let $ c_k, c_m $ be the centre vertices of the $ k $-star and $ m $-star, respectively, and let $ z $ be an isolated vertex. From the proportionality property of $ \cl{M} $ we get
	\begin{flalign*}
	\dfrac{\cl{M}(x;F)}{\cl{M}(y;F)}=\dfrac{\cl{M}(c_k;S_{k,m})}{\cl{M}(c_m;S_{k,m})}=\dfrac{\cl{M}(c_k;S_{k,m})/\cl{M}(z;S_{k,m})}{\cl{M}(c_m;S_{k,m})/\cl{M}(z;S_{k,m})}.
	\end{flalign*}
	Now let $ z_k,z_m $ be isolated nodes in $ S_k,S_m $, respectively. Then again by the proportionality property of $ \cl{M} $,
	\begin{flalign*}
	\dfrac{\cl{M}(c_k;S_{k,m})}{\cl{M}(z;S_{k,m})}=\dfrac{\cl{M}(c_k;S_k)}{\cl{M}(z_k,S_k)}=f(k),\quad\dfrac{\cl{M}(c_m;S_{k,m})}{\cl{M}(z;S_{k,m})}=\dfrac{\cl{M}(c_m;S_m)}{\cl{M}(z_m,S_m)}=f(m).
	\end{flalign*}
	Hence,
	\begin{flalign*}
	\dfrac{\cl{M}(x;F)}{\cl{M}(y;F)}=\dfrac{f(k)}{f(m)}=\dfrac{\cl{M}'(x;F)}{\cl{M}'(y;F)}.
	\end{flalign*}
\end{proof}

In light of Lemma~\ref{lem: fair-FG}, our goal is to show that if $ \cl{M} $ is function-generated, then $ Q(\cl{M})=0 $ (Theorem~\ref{thm: FG impossibility}). To achieve this goal we prove two lemmata. The first, Lemma~\ref{lem: convexity}, states that a function-generated mechanism with a positive quality is ``convex at a distance''; meaning that for any $ x_1,x_2 $ such that $ x_1/x_2 $ is large enough, $ f_n(x_1)/f_n(x_2) $ grows fast with $ n $. 
\begin{lem}\label{lem: convexity}
	Let $ \cl{M} $ be a mechanism generated by the functions $ f_n $. Assume $ Q=Q(\cl{M})>0 $. Then for any $ k,m\in\NN $ such that $ m\geq 2k/Q^2 $, $ f_n(m)=\omega (nf(k)) $.
\end{lem}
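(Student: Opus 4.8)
\textbf{Proof plan for Lemma~\ref{lem: convexity}.}

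The plan is to exhibit, for each large $n$, a forest $F_n$ on $n$ vertices whose quality under $\cl{M}$ is forced to be small unless $f_n(m)/f_n(k)$ is large, and then to read the asymptotic growth rate out of the resulting inequality. The natural candidate is a forest built from one big $m$-star (so $P^*=m$, achieved at its centre $c$) together with roughly $n/k$ disjoint $k$-stars, padding with isolated vertices if needed. The point of the $m\geq 2k/Q^2$ hypothesis is exactly that it makes every $k$-star ``bad'': since $P(k\text{-star centre})=k\leq \tfrac{Q^2}{2}m<\tfrac12 m$, and more to the point $k/m$ is so small that even concentrating \emph{all} the probability outside $c$ on $k$-star centres cannot reach quality $Q$ unless $c$ itself carries a substantial share of the probability.

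First I would compute, using the function-generated form of $\cl{M}$, the probability $\cl{M}(c;F_n)$. In $F_n$ every vertex is a root except the leaves of stars, and leaves get probability $0$ in every $F_n$-like forest by the IC/function-generated structure (a leaf $x$ has $P(x;F_x)=1$ and $F_x$ still has a root of progeny $\geq m>1$, so by the defining formula its excess share is governed by $f_n(1)$ against a large denominator — more simply, one shows leaf-probabilities are negligible, which is all we need). Hence essentially all the mass is split among the $m$-star centre and the $\Theta(n/k)$ $k$-star centres, in proportion $f_n(m)$ to $f_n(k)$ each. So
\[
\cl{M}(c;F_n)\;\approx\;\frac{f_n(m)}{f_n(m)+\Theta(n/k)\,f_n(k)},
\qquad
\sum_{k\text{-star centres}}\cl{M}(\cdot;F_n)\;\approx\;\frac{\Theta(n/k)\,f_n(k)}{f_n(m)+\Theta(n/k)\,f_n(k)}.
\]
Second, I would bound the quality: $Q(\cl{M};F_n)\leq \cl{M}(c;F_n)\cdot 1 + \bigl(1-\cl{M}(c;F_n)\bigr)\cdot \tfrac{k}{m}$, because every vertex other than $c$ has progeny at most $k$ (stars) or $1$ (isolated), so its contribution to $\EE[P(x)]/P^*$ is at most $k/m$. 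Since $Q(\cl{M})\leq Q(\cl{M};F_n)$ for every $n$, we get $Q\leq \cl{M}(c;F_n)+k/m$, i.e. $\cl{M}(c;F_n)\geq Q-k/m$. Now feed in $m\geq 2k/Q^2$, which gives $k/m\leq Q^2/2\leq Q/2$ (as $Q\leq1$), hence $\cl{M}(c;F_n)\geq Q/2$.

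Third, I would invert the displayed expression for $\cl{M}(c;F_n)$: from $\cl{M}(c;F_n)\geq Q/2$ we get $f_n(m)\geq \tfrac{Q/2}{1-Q/2}\cdot\Theta(n/k)\cdot f_n(k)$, and since $Q$ and $k$ are fixed constants this is precisely $f_n(m)=\omega\bigl(n\,f_n(k)\bigr)$ (writing $f$ for $f_n$ in the statement). I expect the main obstacle to be the bookkeeping around the leaf vertices and the isolated padding vertices: strictly speaking the function-generated formula distributes the \emph{excess} probability over the roots, and one must check that the non-root (leaf) vertices in $F_n$ only absorb a vanishing fraction of the total mass, so that the ratio $\cl{M}(c)/\sum_{k\text{-centres}}\cl{M}$ really is $f_n(m)/f_n(k)$ up to lower-order terms — and, more subtly, that when we compare across different values of $n$ the implicit constants in $\Theta(n/k)$ do not hide a dependence that would spoil the $\omega(\cdot)$ conclusion. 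Handling this cleanly likely means choosing $F_n$ with \emph{exactly} $\lfloor (n-m)/k\rfloor$ $k$-stars and a controlled number of isolated vertices, and arguing the leaf contributions away by a crude union bound before taking $n\to\infty$.
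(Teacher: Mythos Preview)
Your construction and the quality bound are exactly the right idea, and indeed this is essentially what the paper does. However, your final step contains a genuine gap: from $\cl{M}(c;F_n)\geq Q/2$ you correctly deduce
\[
f_n(m)\;\geq\;\frac{Q/2}{1-Q/2}\cdot\Theta\!\left(\frac{n}{k}\right)\cdot f_n(k),
\]
but with $Q$ and $k$ fixed this only says $f_n(m)=\Omega\bigl(n\,f_n(k)\bigr)$, \emph{not} $\omega\bigl(n\,f_n(k)\bigr)$. A constant times $n$ is $\Theta(n)$, not $\omega(n)$; you have produced a linear lower bound where the lemma demands a superlinear one. No amount of bookkeeping on leaves or padding vertices will upgrade $\Omega$ to $\omega$ here, because the inequality $\cl{M}(c)\geq Q/2$ is genuinely tight up to constants.

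The paper closes this gap by a bootstrap. It first proves exactly your statement---$f_n(\ell)=\Omega(n\,f_n(k))$---but under the \emph{weaker} hypothesis $\ell\geq k\sqrt{2}/Q$ (your argument works verbatim for this, since $k/\ell\leq Q/\sqrt{2}<Q$ still forces $\cl{M}(c_\ell)=\Omega(1)$). Then, since $m\geq 2k/Q^2=(\sqrt{2}/Q)\cdot(k\sqrt{2}/Q)$, one applies the $\Omega(n)$ bound twice in a chain:
\[
f_n(m)\;=\;\Omega\!\bigl(n\,f_n(k\sqrt{2}/Q)\bigr)\;=\;\Omega\!\bigl(n\cdot\Omega(n\,f_n(k))\bigr)\;=\;\Omega\!\bigl(n^{2}f_n(k)\bigr)\;=\;\omega\!\bigl(n\,f_n(k)\bigr).
\]
So the missing idea is this two-step iteration; your single application of the star construction stops one notch short. (Incidentally, the leaf-probability issue you flag as the main obstacle is a non-issue: since leaves are not roots, the function-generated formula gives $\cl{M}(c)\leq f_n(m)\big/\bigl(f_n(m)+\tfrac{n-m}{k}f_n(k)\bigr)$ directly as an upper bound, because the excess factor is at most~$1$.)
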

\begin{proof}
	Fix $ k $. We will show first that for any $ \ell\geq k\sqrt{2}/Q $, $ f(\ell)=\Omega(nf(k)) $. If we show this, then for any $ m\geq2k/Q^2 $, we can use this claim twice and get that
	\[ f(m)\geq f(2k/Q^2)=\Omega(nf(k\sqrt{2}/Q))=\Omega(n^2f(k))=\omega(nf(k)). \]
	Let $ F $ be the forest on $ n $ vertices with one $ \ell $-star and $ \dfrac{n-\ell}{k} $ $ k $-stars, with $ c_\ell $ as the centre vertex of the $ \ell $-star. Since, 
	\begin{flalign*}
	Q\leq Q(\cl{M};F)\leq\dfrac{\ell\cdot\cl{M}(c_\ell)+k\cdot(1-\cl{M}(c_\ell))}{\ell}\leq\cl{M}(c_\ell)(1-{Q}/{\sqrt{2}})+{Q}/{\sqrt{2}},
	\end{flalign*}	
	we must have $ \cl{M}(c_\ell)=\Omega(1) $. Using $ f $ to bound $ \cl{M}(c_\ell) $ we get
	\begin{flalign*}
	\cl{M}(c_\ell)\leq \dfrac{f(\ell)}{f(\ell)+\tfrac{n-\ell}{k}\cdot f(k)}= \dfrac{1}{1+\tfrac{n-\ell}{k}\cdot \tfrac{f(k)}{f(\ell)}}=\Omega(1)\\
	\Longrightarrow \dfrac{f(k)}{f(\ell)}=O\left (\dfrac{k}{n-\ell}\right )=O(n^{-1}),
	\end{flalign*}
	as needed.
\end{proof}

Lemma~\ref{lem: overpaying}, shows a specific structure of a tree which, under certain conditions, leads to an over-distribution by a function-generated mechanism. 
\begin{lem}\label{lem: overpaying}
	Let $ \cl{M} $ be a mechanism generated by the functions $ f_n $. Let $ F $ be the tree of four connected stars as in Figure~\ref{fig: overpaying lemma}. 

\begin{figure}[h]
	
\centering
	\begin{tikzpicture}[scale=1, line width=0.45mm,every node/.style={draw,diamond,inner sep=2pt}]
	\node[label={below:$b$}](x1) at (0pt,-10pt){$ x_1 $};
	\node[label={below:$b$}](x2) at (50pt,-10pt){$ x_2 $};
	\node[label={below:$a$}](x3) at (100pt,-10pt){$ x_3 $};
	\node[label={below:$a$}](x4) at (150pt,-10pt){$ x_4 $};
	\draw[->] (x1) -- (x2);
	\draw[->] (x2) -- (x3);
	\draw[->] (x3) -- (x4);
	\end{tikzpicture}	
\caption{}\label{fig: overpaying lemma}
\end{figure}
	
Denote $ k=\dfrac{f_n(b)}{f_n(2a)}, m=\dfrac{f_n(a+b)}{f_n(2a)} $. Suppose that $ a=a(n),b=b(n) $ are such that:
	\begin{enumerate*}[label=(\arabic*)]
		\item $ b\geq  2a $,\label{item: b>2a}
		\item $ \dfrac{f_n(1)}{f_n(b)}=o(n^{-1}) $,\label{item: f(1)<f(b)}
		\item $ m\geq 7k^2  $,\label{item: main item}
		\item $\lim\limits_{n\to\infty}\dfrac{f_n(a)}{f_n(2a)}=0 $.\label{item: f(a)<f(2a)}
	\end{enumerate*}

Then $ \sum_{x\in N}\cl{M}(x;F)\geq (1-o(1))\left (1+\dfrac{1}{48k}\right ) $.
\end{lem}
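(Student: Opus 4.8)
The goal is to compute (or lower-bound) $\sum_{x\in N}\cl{M}(x;F)$ for the four-star tree $F$ of Figure~\ref{fig: overpaying lemma}, where the stars centred at $x_1,x_2$ have order $b$ and those at $x_3,x_4$ have order $a$. The progenies along the path are $P(x_4)=a$, $P(x_3)=2a$, $P(x_2)=2a+b$, $P(x_1)=2a+2b$. I would compute the four contributions $\cl{M}(x_i;F)$ one at a time by exploiting incentive-compatibility: $\cl{M}(x_i;F)=\cl{M}(x_i;F_{x_i})$, and in $F_{x_i}$ the vertex $x_i$ is a root, so the function-generated formula from Definition~\ref{dfn: funtion-generated} applies directly. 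The leaves of the stars contribute negligibly: each leaf $\ell$ satisfies $\cl{M}(\ell;F_\ell)=\cl{M}(\ell)$, and $F_\ell$ is a forest whose roots all have progeny at most $2a+2b\le\text{const}\cdot b$ while the singleton leaves have progeny $1$; by hypothesis~\ref{item: f(1)<f(b)}, $f_n(1)/f_n(b)=o(n^{-1})$, and there are $O(n)$ such leaves, so their total mass is $o(1)$. Hence $\sum_{x\in N}\cl{M}(x;F)=\sum_{i=1}^4\cl{M}(x_i;F)+o(1)$, and everything reduces to the four centre vertices.

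For each centre I would identify the relevant forest $F_{x_i}$ and apply the formula. When we delete the out-edge of $x_4$, $F_{x_4}$ is the original tree minus that edge; $x_4$ becomes an isolated $a$-star-centre, so its numerator is $f_n(a)$ against a denominator dominated by the roots of the big component — and by hypothesis~\ref{item: f(a)<f(2a)} together with $b\ge 2a$ the ratio $f_n(a)/f_n(2a)\to 0$, so $\cl{M}(x_4;F)=o(1)$. For $x_3$: in $F_{x_3}$ the roots are $x_3$ (progeny $2a$) and $x_1$ (progeny $2a+2b\ge 2a$, in fact much larger), so $\cl{M}(x_3;F)$ is roughly $f_n(2a)/(f_n(2a)+f_n(2a+2b))$ times the excess after paying the leaves; this is at most $1$ and I expect it to be bounded. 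For $x_2$: in $F_{x_2}$, $x_2$ has progeny $a+b$, and the other root $x_1$ now has progeny $2b$ (since the $x_3$-subtree detached with $x_2$) — wait, more carefully, removing the out-edge of $x_2$ separates the subtree rooted at $x_2$ (which contains $x_3,x_4$, total progeny $2a+b$) from $x_1$'s star (progeny $b$); so $\cl{M}(x_2;F)\approx f_n(2a+b)/(f_n(2a+b)+f_n(b))$ times the excess. Note $f_n(2a+b)/f_n(2a)=?$ is not one of the named ratios, but $f_n(a+b)/f_n(2a)=m$ is — so I need to be a little careful about which argument actually appears. Re-examining: the progeny of $x_2$ in $F$ is $2a+b$, but hypothesis~\ref{item: main item} is stated for $m=f_n(a+b)/f_n(2a)$. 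The resolution is presumably that in $F_{x_2}$ the detached piece carrying $x_2$ has a different structure, or the stated lemma uses a slightly different figure than I'm parsing; I would follow the paper's own bookkeeping on which progeny values arise. For $x_1$, which is a root of $F$ itself, $\cl{M}(x_1;F)$ equals $f_n(P(x_1))$ over the root-sum times the excess, where the excess is $1$ minus the non-root mass, i.e. $1-\cl{M}(x_2)-\cl{M}(x_3)-\cl{M}(x_4)-(\text{leaf mass})$.

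The punchline is a \emph{double-counting} phenomenon: $x_2$ (when we delete its out-edge) and $x_1$ (as an honest root) both claim a share proportional to $f_n$ of roughly the same large progeny, because deleting $x_2$'s out-edge makes $x_2$'s progeny comparable to $x_1$'s while simultaneously shrinking $x_1$'s progeny in that sub-forest. Concretely, $\cl{M}(x_2;F)$ is large — of order $m/(m+k)\approx 1-k/m\ge 1-1/(7k)$ — because in $F_{x_2}$ the other root has only progeny $b$ giving ratio $k$, while in $F$ itself $x_1$ still gets a non-trivial share because its progeny $2a+2b$ dwins against the internal demand. Summing, $\cl{M}(x_1;F)+\cl{M}(x_2;F)$ exceeds $1$ by a term of order $1/k$; tracking constants and using $m\ge 7k^2$ (so $k/m\le 1/(7k)$) I would arrive at the claimed $(1-o(1))(1+\tfrac{1}{48k})$. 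The main obstacle is purely the arithmetic bookkeeping: getting the excess-probability recursion right (the excess fed into $x_1$'s formula already has $\cl{M}(x_2)$ subtracted from it, so one must be careful not to count the subtraction twice), and then choosing the numerical slack so the leftover constant is exactly $1/48$ rather than some other absolute constant. I would set up the inequality chain $\cl{M}(x_1)\ge f_n(P(x_1))/(\text{root sum})\cdot(\text{excess})$ with the excess bounded below using $\cl{M}(x_4),\cl{M}(x_3)=o(1)$ or bounded, and combine with the lower bound on $\cl{M}(x_2)$, being careful that the $o(1)$ terms from the leaves and from hypotheses~\ref{item: f(1)<f(b)} and~\ref{item: f(a)<f(2a)} are uniform.
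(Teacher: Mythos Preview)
Your overall strategy—use IC to reduce each $\cl{M}(x_i;F)$ to a root computation in a subforest, dismiss the leaf mass as $o(1)$ via hypothesis~\ref{item: f(1)<f(b)}, and then exhibit centre vertices whose probabilities already sum past $1$—is exactly the paper's. But two concrete errors derail the execution.

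First, the tree is oriented the other way. The arrows $x_1\to x_2\to x_3\to x_4$ make $x_4$ the root, so the progenies are $P(x_1)=b$, $P(x_2)=2b$, $P(x_3)=a+2b$, $P(x_4)=2a+2b$. This is exactly why you could not locate the value $a+b$ that defines $m$: it is $P(x_3)$ in the auxiliary subforest $F_5$ where only the edge $x_2\to x_3$ is present (one of six subforests $F_1,\dots,F_6$ the paper traces through to unwind the IC recursion).

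Second, and more seriously, over-distribution cannot be witnessed by any sum that includes the root. Since $F$ is a tree it has a single root, and the function-generated formula assigns that root exactly $1$ minus the total non-root mass; hence $\cl{M}(\text{root})+\cl{M}(x_j)=1-\sum_{i\neq j,\,x_i\neq\text{root}}\cl{M}(x_i)-(\text{leaves})\le 1$ identically. The paper instead lower-bounds two \emph{non-root} centres: it shows
\[
\cl{M}(x_2;F)\ge(1-o(1))\,\frac{1+k^{-1}}{(1+m^{-1})(2+k^{-1})}\quad\text{and}\quad \cl{M}(x_3;F)\ge(1-o(1))\,\frac{m/k}{2(1+m/k)},
\]
the first via $F_{x_2}=F_4$ (where the only competing root $x_4$ has progeny $2a$) and the second via $F_{x_3}=F_6$ (where $x_4$ has progeny $a$). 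Algebra under $m\ge 7k^2$ then gives $\cl{M}(x_2)+\cl{M}(x_3)\ge(1-o(1))(1+\tfrac{1}{48k})$. Fix the orientation and shift your target pair from (root, child) to $(x_2,x_3)$, and your sketch becomes the paper's proof.
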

\begin{proof}
	Consider the six sub-forests of $ F $ depicted in Table~\ref{tbl: subforests}. Let $ f=f_n $.
	\begin{table}
		\def\arraystretch{1.25} 
		\centering
		\begin{tabular}{|c|c|c|}
			\hline
			\begin{tikzpicture}[scale=2/3, line width=0.45mm,every node/.style={draw,diamond,inner sep=1pt}]
			\node[label={below:$b$}](x1) at (0pt,-10pt){$ x_1 $};
			\node[label={below:$b$}](x2) at (50pt,-10pt){$ x_2 $};
			\node[label={below:$a$}](x3) at (100pt,-10pt){$ x_3 $};
			\node[label={below:$a$}](x4) at (150pt,-10pt){$ x_4 $};
			\draw(75pt,-60pt) node[below,draw=none,fill=none,inner sep=-10pt] {$F_1$};
			\addvmargin{-50pt}
			\end{tikzpicture}		   &  
			\begin{tikzpicture}[scale=2/3, line width=0.45mm,every node/.style={draw,diamond,inner sep=1pt}]
			\node[label={below:$b$}](x1) at (0pt,-10pt){$ x_1 $};
			\node[label={below:$b$}](x2) at (50pt,-10pt){$ x_2 $};
			\node[label={below:$a$}](x3) at (100pt,-10pt){$ x_3 $};
			\node[label={below:$a$}](x4) at (150pt,-10pt){$ x_4 $};
			\draw[->] (x1) -- (x2);
			\draw(75pt,-60pt) node[below,draw=none,fill=none,inner sep=-10pt] {$F_2$};
			\addvmargin{-50pt}
			\end{tikzpicture}		 &
			\begin{tikzpicture}[scale=2/3, line width=0.45mm,every node/.style={draw,diamond,inner sep=1pt}]
			\node[label={below:$b$}](x1) at (0pt,-10pt){$ x_1 $};
			\node[label={below:$b$}](x2) at (50pt,-10pt){$ x_2 $};
			\node[label={below:$a$}](x3) at (100pt,-10pt){$ x_3 $};
			\node[label={below:$a$}](x4) at (150pt,-10pt){$ x_4 $};
			\draw[->] (x3) -- (x4);
			\draw(75pt,-60pt) node[below,draw=none,fill=none,inner sep=-10pt] {$F_3$};
			\addvmargin{-50pt}
			\end{tikzpicture}		
			\\ \hline
			
			\begin{tikzpicture}[scale=2/3, line width=0.45mm,every node/.style={draw,diamond,inner sep=1pt}]
			\node[label={below:$b$}](x1) at (0pt,-10pt){$ x_1 $};
			\node[label={below:$b$}](x2) at (50pt,-10pt){$ x_2 $};
			\node[label={below:$a$}](x3) at (100pt,-10pt){$ x_3 $};
			\node[label={below:$a$}](x4) at (150pt,-10pt){$ x_4 $};
			\draw[->] (x1) -- (x2);
			\draw[->] (x3) -- (x4);
			\draw(75pt,-60pt) node[below,draw=none,fill=none,inner sep=-10pt] {$F_4$};
			\addvmargin{-50pt}
			\end{tikzpicture}		   &  
			\begin{tikzpicture}[scale=2/3, line width=0.45mm,every node/.style={draw,diamond,inner sep=1pt}]
			\node[label={below:$b$}](x1) at (0pt,-10pt){$ x_1 $};
			\node[label={below:$b$}](x2) at (50pt,-10pt){$ x_2 $};
			+	\node[label={below:$a$}](x3) at (100pt,-10pt){$ x_3 $};
			\node[label={below:$a$}](x4) at (150pt,-10pt){$ x_4 $};
			\draw[->] (x2) -- (x3);
			\draw(75pt,-60pt) node[below,draw=none,fill=none,inner sep=-10pt] {$F_5$};
			\addvmargin{-50pt}
			\end{tikzpicture}		 &
			\begin{tikzpicture}[scale=2/3, line width=0.45mm,every node/.style={draw,diamond,inner sep=1pt}]
			\node[label={below:$b$}](x1) at (0pt,-10pt){$ x_1 $};
			\node[label={below:$b$}](x2) at (50pt,-10pt){$ x_2 $};
			\node[label={below:$a$}](x3) at (100pt,-10pt){$ x_3 $};
			\node[label={below:$a$}](x4) at (150pt,-10pt){$ x_4 $};
			\draw[->] (x1) -- (x2);
			\draw[->] (x2) -- (x3);
			\draw(75pt,-60pt) node[below,draw=none,fill=none,inner sep=-10pt] {$F_6$};
			\addvmargin{-50pt}
			\end{tikzpicture}		
			\\ \hline	
		\end{tabular}
		\caption{}
		\label{tbl: subforests}
	\end{table}
	In each of these forests, all the nodes in  $ N\backslash \{x_1,x_2,x_3,x_4\} $ have progeny 1, and their probabilities are at most $ \tfrac{f(1)}{f(b)}$. Hence by property~\ref{item: f(1)<f(b)}, the total probability of $ N\backslash \{x_1,x_2,x_3,x_4\} $ is at most $ \tfrac{nf(1)}{f(b)}=o(1) $. We will thus ignore the probabilities of these nodes and assume they are actually zero.\\
	We claim that $ f $ must be monotone non-decreasing; otherwise, let $ k>m $ be such that $ f(k)<f(m) $, then
	\begin{flalign*}
	\dfrac{\cl{M}(c_k;S_{k,m})}{\cl{M}(c_m;S_{k,m})}=\dfrac{f(k)}{f(m)}<1,
	\end{flalign*}
	in contradiction to the monotonicity property of a fair mechanism. Thus, since $ 2b>b>2a $, together with property~\ref{item: f(a)<f(2a)}, we get that 
	\begin{flalign*}
	\lim_{n\to\infty}\dfrac{f(a)}{f(b)}=	\lim_{n\to\infty}\dfrac{f(a)}{f(2b)}=0.
	\end{flalign*}
	Hence, in the forest $ F_1 $, $ \cl{M}(x_3;F_1)=\cl{M}(x_4;F_1)=\tfrac{f(a)}{2f(a)+2f(b)}=o(1) $, and similarly in $ F_2 $, $ \cl{M}(x_3;F_2)\leq\tfrac{f(a)}{2f(a)+f(2b)}=o(1) $. Using the IC property we see that $ \cl{M}(x_3;F_3)=\cl{M}(x_3;F_1) $ and we can calculate that $ \cl{M}(x_1;F_3)=\tfrac{f(b)}{f(2a)+2f(b)}(1-\cl{M}(x_3;F_1))=\tfrac{1-o(1)}{2+k^{-1}} $. Again by IC, $ \cl{M}(x_1;F_4)=\cl{M}(x_1;F_3) $, and $ \cl{M}(x_3;F_4)=\cl{M}(x_3;F_2) $. Using IC and $ f(2b)>f(a+b) $, we get
	\begin{flalign}
	\cl{M}(x_2;F)&=\cl{M}(x_2;F_4)=\dfrac{f(2b)}{f(2b)+f(2a)}(1-\cl{M}(x_1;F_3)-\cl{M}(x_3;F_2))\nonumber\\
	&\geq \dfrac{1}{1+m^{-1}}\left (1-\dfrac{1-o(1)}{2+k^{-1}}-o(1)\right )=(1-o(1))\dfrac{1+k^{-1}}{(1+m^{-1})(2+k^{-1})}.\label{eq: x2}
	\end{flalign}
	Below we calculate $ \cl{M}(x_3;F) $ in a similar way without further elaboration.
	\begin{flalign}
	\cl{M}(x_1;F_6)&=\cl{M}(x_1;F_5)=\dfrac{f(b)}{f(a)+f(b)+f(a+b)}(1-\cl{M}(x_2;F_1))\nonumber\\
	&\leq \dfrac{1-1/2(1-o(1))}{1+\tfrac{f(a+b)}{f(b)}}= \dfrac{1+o
		(1)}{2(1+mk^{-1})}.\nonumber\\
	\cl{M}(x_2;F_6)&=\cl{M}(x_2;F_2)= \dfrac{f(2b)}{2f(a)+f(2b)}(1-\cl{M}(x_1;F_1))\leq \dfrac{1}{2}.\nonumber\\
	\cl{M}(x_3;F)&=\cl{M}(x_3;F_6)=\dfrac{f(a+2b)}{f(a)+f(a+2b)}(1-\cl{M}(x_1;F_6)-\cl{M}(x_2;F_6))\nonumber\\
	&\geq (1-o(1))\left (1-\dfrac{1}{2}-\dfrac{1+o(1)}{2(1+mk^{-1})}\right )
	=(1-o(1))\dfrac{mk^{-1}}{2(1+mk^{-1})}.\label{eq: x3}
	\end{flalign}
	Combining (\ref{eq: x2}) and (\ref{eq: x3}) we get:
	\begin{flalign*}
	\sum_{x\in N}\cl{M}(x;F)\geq (1-o(1))\left(\dfrac{1+k^{-1}}{(1+m^{-1})(2+k^{-1})}+\dfrac{mk^{-1}}{2(1+mk^{-1})}  \right).
	\end{flalign*}
	Now, using property~\ref{item: main item} and some algebra,
	\begin{flalign*}
	\dfrac{1+k^{-1}}{(1+m^{-1})(2+k^{-1})}+\dfrac{mk^{-1}}{2(1+mk^{-1})}\geq \dfrac{1+k^{-1}}{(1+k^{-2}/7)(2+k^{-1})}+\dfrac{7k}{2(1+7k)}\\
	=1+\dfrac{1}{2}\cdot\dfrac{35k^3-14k^2-11k-2}{98k^4+63k^3+21k^2+9k+1},
	\end{flalign*}
	and since $ k\geq 1 $ (due to the monotonicity of $ f $), 
	\begin{flalign*}		
	\geq 1+\dfrac{1}{2}\cdot\dfrac{8k^3}{192k^4}=1+\dfrac{1}{48k},
	\end{flalign*}
	as claimed.
\end{proof}

Theorem~\ref{thm: FG impossibility} is just a step away from Theorem~\ref{thm: Fair impossibility}.
\begin{thm}\label{thm: FG impossibility}
	Let $ \cl{M} $ be a function-generated mechanism. Then $ Q(\cl{M})=0 $.
\end{thm}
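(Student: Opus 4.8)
The plan is to argue by contradiction: suppose $\cl{M}$ is function-generated (via functions $f_n$) and $Q=Q(\cl{M})>0$. I will then exhibit, for all large $n$, a forest of the shape in Figure~\ref{fig: overpaying lemma} for which Lemma~\ref{lem: overpaying} forces $\sum_{x}\cl{M}(x;F)\geq(1-o(1))(1+\tfrac{1}{48k})$, which exceeds $1$ once $k$ is bounded and $n$ is large — contradicting that $\cl{M}$ is a (sub-)distribution. So the whole task reduces to choosing $a=a(n)$ and $b=b(n)$ so that the four hypotheses \ref{item: b>2a}--\ref{item: f(a)<f(2a)} of Lemma~\ref{lem: overpaying} are simultaneously satisfied, while keeping $k=f_n(b)/f_n(2a)$ bounded (so that $1+\tfrac{1}{48k}$ stays bounded away from $1$).

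The tool for producing such $a,b$ is Lemma~\ref{lem: convexity}: since $Q>0$, for every fixed $c$ and every $m\geq 2c/Q^2$ we have $f_n(m)=\omega(n f_n(c))$. First I would fix a constant $c_0$ (say $c_0=1$) and set $a=a(n)$ to grow slowly — for instance $a=\lceil 2/Q^2\rceil$ is already enough for some of the conditions, but to get condition \ref{item: f(a)<f(2a)} I want $f_n(a)/f_n(2a)\to 0$, and for that I apply Lemma~\ref{lem: convexity} with $k'=a$, $m'=2a$: this needs $2a\geq 2a/Q^2$, i.e. $Q\leq 1$, which always holds, giving $f_n(2a)=\omega(n f_n(a))$, hence $f_n(a)/f_n(2a)=o(1/n)=o(1)$ — condition \ref{item: f(a)<f(2a)} is free. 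Similarly, taking $b\geq 2a$ large enough (e.g. $b=\lceil 2a/Q^2\rceil$, which also secures \ref{item: b>2a}), Lemma~\ref{lem: convexity} with $k'=1,m'=b$ gives $f_n(b)=\omega(nf_n(1))$, i.e. $f_n(1)/f_n(b)=o(1/n)$ — condition \ref{item: f(1)<f(b)} is in hand.

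The remaining and genuinely binding requirement is condition \ref{item: main item}: $m\geq 7k^2$, where $k=f_n(b)/f_n(2a)$ and $m=f_n(a+b)/f_n(2a)$. Here I again invoke Lemma~\ref{lem: convexity}, now comparing $f_n(a+b)$ with $f_n(b)$: if $a+b\geq 2b/Q^2$ then $f_n(a+b)=\omega(n f_n(b))$, so $m/k=f_n(a+b)/f_n(b)\to\infty$, and in particular $m\geq 7k^2$ will hold for large $n$ \emph{provided $k$ does not itself blow up with $n$}. Controlling $k$ is the main obstacle: a priori $k=f_n(b)/f_n(2a)$ could grow with $n$ fast enough to outpace $m/k$. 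I expect the fix is to choose $a,b$ so that $b$ is only a bounded multiple of $2a$ (this is why $b=\lceil 2a/Q^2\rceil$, a fixed ratio, is the right choice rather than letting $b/a\to\infty$), and then to use monotonicity of $f_n$ (established inside the proof of Lemma~\ref{lem: overpaying}) together with a \emph{quantitative} form of Lemma~\ref{lem: convexity} — reading off from its proof that $f_n(\ell)/f_n(k')=\Omega(n)$ with an explicit constant depending only on $k',\ell,Q$ — to bound $k$ uniformly in $n$ by a constant depending only on $Q$. With $a,b$ both bounded by constants depending only on $Q$, all four conditions hold for $n$ large, $k=O_Q(1)$, and Lemma~\ref{lem: overpaying} yields $\sum_x\cl{M}(x;F)\geq 1+\Omega_Q(1)>1$ for large $n$, the desired contradiction. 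Finally, Theorem~\ref{thm: Fair impossibility} follows: by Lemma~\ref{lem: fair-FG} an IC, fair, exact mechanism agrees with a function-generated one on all forests with at least three roots, and the forest $F$ above has three roots (the four stars form one tree, but the complementary isolated vertices are roots, or one simply pads with isolated vertices), so its quality is $0$ as well.
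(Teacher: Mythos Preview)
Your proposal has two genuine gaps, and they sit exactly at the delicate points of the argument.

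\textbf{Condition \ref{item: f(a)<f(2a)} is not free.} Your application of Lemma~\ref{lem: convexity} to the pair $(a,2a)$ requires $2a\ge 2a/Q^2$, which is $Q\ge 1$, not $Q\le 1$; since $Q<1$ always, the lemma is simply inapplicable to a single doubling step. More generally, Lemma~\ref{lem: convexity} only controls ratios $f_n(m)/f_n(k')$ when $m/k'\ge 2/Q^2$, a large constant, so it says nothing directly about $f_n(2a)/f_n(a)$. The paper handles this by a pigeonhole: among the doublings $y=2/Q^2,4/Q^2,\dots$ up to $4/Q^4$, the telescoped product of the ratios $f_n(2y)/f_n(y)$ is, by Lemma~\ref{lem: convexity}, $\omega(n)$; hence at least one step satisfies $f_n(2y)/f_n(y)>n^{1/\log_2(4/Q^2)}$, and that $y$ is taken as $a$.

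\textbf{Condition \ref{item: main item} cannot be obtained your way, and $k$ cannot be bounded via Lemma~\ref{lem: convexity}.} You invoke Lemma~\ref{lem: convexity} for the pair $(b,a+b)$, but the hypothesis $a+b\ge 2b/Q^2$ forces $a\ge b(2/Q^2-1)\ge b$, contradicting $b\ge 2a$. So $f_n(a+b)/f_n(b)$ is not controlled, and your claim $m/k\to\infty$ is unfounded. Worse, your plan to cap $k=f_n(b)/f_n(2a)$ using a ``quantitative form'' of Lemma~\ref{lem: convexity} goes in the wrong direction: that lemma only ever gives \emph{lower} bounds on ratios $f_n(m)/f_n(k')$, never upper bounds, so it cannot prevent $k$ from blowing up with $n$. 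The paper's fix is a bootstrapping search: with $a=y$ fixed as above, it tries $b=(2+i)y$ for $i=0,1,\dots,\lfloor 4/Q^2\rfloor$, sets $k_i=f_n((2+i)y)/f_n(2y)$ and $m_i=k_{i+1}$, and shows inductively that if $m_j<7k_j^2$ for all $j\le i$ then $m_i$ is bounded by an explicit constant depending only on $i$. Since Lemma~\ref{lem: convexity} forces $m_{\lfloor 4/Q^2\rfloor}=f_n((3+\lfloor 4/Q^2\rfloor)y)/f_n(2y)=\omega(n)$, some first index $i$ must satisfy $m_i\ge 7k_i^2$; at that index $k_i=m_{i-1}$ inherits the constant bound from the previous step. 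This recursive argument is the missing idea in your proposal.
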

\begin{proof}
	Assume that $ \cl{M} $ is generated by functions $ f_n $ and has quality $ Q=Q(\cl{M})>0 $. Let $ y $ be the smallest value in $ \{y=2^i/Q^2:i\in \NN, i\geq 1, \dfrac{f_n(2y)}{f_n(y)}>n^{1/\log_2(4/Q^2)}\} $. By Lemma~\ref{lem: convexity}, for large enough $ n $, $ y\leq 4/Q^4 $, for otherwise 
	\[ \dfrac{f_n(4/Q^4)}{f_n(2/Q^2)}\leq\prod_{i=1}^{\lfloor\log_2(4/Q^2)\rfloor}\dfrac{f_n(2^{i+1}/Q^2)}{f_n(2^{i}/Q^2)}\leq \left (n^{1/\log_2(4/Q^2)}\right )^{\lfloor\log_2(4/Q^2)\rfloor}\leq n, \]
	in contradiction to the lemma with $ k=2/Q^2 $.\\
	
	\noindent For two values $ a',b' $, let $ F(a',b') $ be the forest of Lemma~\ref{lem: overpaying} with $ a=a',b=b' $. Consider the set of forests $ \{F_i(y, (2+i)y) \}_{i=0}^{4/Q^2} $. Denote also by $ k_i,m_i $ the values of $ k,m $ in Lemma~\ref{lem: overpaying} for $ F_i $. We claim that for all the forests in this set, we have properties \ref{item: b>2a}, \ref{item: f(1)<f(b)} and \ref{item: f(a)<f(2a)} of the lemma. Indeed, \ref{item: b>2a} is trivial; since $ y\geq2/Q^2 $, Lemma~\ref{lem: convexity} implies property~\ref{item: f(1)<f(b)}; and property~\ref{item: f(a)<f(2a)} comes directly from the definition of $ y $. \\

	\noindent To complete the proof we will show that there is a forest $ F_i $ such that $ m_i\geq 7k_i^2 $ (i.e., property~\ref{item: main item}) and $ k_i $ is uniformly bounded from above. This will allow us to infer that $ \cl{M} $ is over-distributing on $ F_i $.\\
	Notice that $ k_0=1 $, and for any $ i\geq 1 $,
	\begin{flalign*}
	k_i=\dfrac{f_n((2+i)y)}{f_n(2y)};\; m_i=\dfrac{f_n((3+i)y)}{f_n(2y)}=k_{i+1}.
	\end{flalign*}
	Suppose that for all $ 0\leq j\leq i $, $ m_j<7k_j^2 $. Then,
	\begin{flalign}
	m_i=\dfrac{m_i}{k_0}=\prod_{j=0}^i\dfrac{m_j}{k_j}<7^{i+1}\prod_{j=0}^ik_j=7^{i+1}\prod_{j=0}^{i-1}m_j<7^{i+1}\cdot 7^i\prod_{j=0}^{i-1}k^2_j<7^{i+1}\cdot7^i\prod_{j=0}^{i-2}m^2_j\nonumber\\
	<7^{i+1}\cdot 7^{i+2(i-1)}\prod_{j=0}^{i-2}k^4_j<\ldots<7^{i+1}\cdot7^{\sum_{j=1}^{i}j\cdot2^{i-j}}k_0^{2^{i}}=7^{i+1+\sum_{j=1}^{i}j\cdot2^{i-j}}.\label{eq: bound m}
	\end{flalign}
	Hence if $ m_j<7k_j^2 $ for all $ 0\leq j\leq y/Q^2 $, then $ m_{4/Q^2}=\dfrac{f_n((4/Q^2+3)y)}{f_n(2y)} $ is finitely bounded (remember that we bound $ y\leq 4/Q^4 $); but $ \dfrac{(4/Q^2+3)y}{2y}> \dfrac{2}{Q^2} $, which for large enough $ n $ would lead to a contradiction to Lemma~\ref{lem: convexity}.\\
	Now let $ i $ be the first index such that $ m_i\geq 7k_i^2 $. Then we can put $ i=4/Q^4 $ in (\ref{eq: bound m}) to get a uniform upper bound on $ k_i=m_{i-1} $. Hence by Lemma~\ref{lem: overpaying},
	\[ \sum_{x\in N}\cl{M}(x;F)\geq(1-o(1))\left (1+\dfrac{1}{48k_i}\right )>1, \]
	for $ n $ large enough.
\end{proof}
\begin{proof}[Proof of Theorem~\ref{thm: Fair impossibility}]
	Basically, the proof is a direct consequence of Lemma~\ref{lem: fair-FG} and Theorem~\ref{thm: FG impossibility}. However, the claim of Lemma~\ref{lem: fair-FG} applies only to 	 forests with at least three roots, whereas the proof of Lemma~\ref{lem: overpaying} was based on forests with less than three roots. It is not hard, however, to see that the proof will not suffer if we add to $ F $ two isolated vertices, thus the derivation is legitimate.
\end{proof}


\bibliographystyle{siam}
\bibliography{paper}

\appendix

\section{The proof for $ \cl{M}_b $}
\begin{proof}[Proof of Theorem~\ref{thm: balanced mechanism}]
	Mechanism $ \cl{M}_b $ is IC and exact by definition. For any $ x\in N $, if $ P(x)<\tfrac{1}{3}P^* $ then $ P^*(F_x)>\tfrac{2}{3}P^*>2P(x) $ and $ \cl{M}_b(x)=0 $; hence $ \supp(\cl{M}_b)\subseteq\{x\in N: P(x)\geq \tfrac{1}{3}P^* \} $ which implies that $ Q(\cl{M}_b)\geq\tfrac{1}{3} $. By definition $ \cl{M}_b(r_i)\geq 0 $ for all $ i>1 $.  It remains to show that $ \cl{M}_b(r_1)\geq 0 $ for all forests.\\
	Let $ F $ be any forest. It is easy to verify that $ \cl{M}_b(r_1)>0 $ when $ \underline{P}(r_1)\leq \tfrac{1}{2}P^* $. We assume then that $ \underline{P}(r_1)>\tfrac{1}{2}P^* $. For every $ i>1 $ with $ P(r_i)>\tfrac{1}{2}P^* $, $ \supp(\cl{M}_b)\cap T(r_i)=\{x\in T(r_i): P(x)>\tfrac{1}{2}P^* \} $, and since $ \tfrac{1}{2}P^*>\tfrac{1}{2}P(r_i) $, by Observation~\ref{obs: progeny} $ \supp(\cl{M}_b)\cap T(r_i) $ is a path. This means that \[ \cl{M}_b(T(r_i))=\dfrac{1}{\ln2}\sum_{x\in T(r_i):P(x)>\tfrac{1}{2}P^*}\int_{\max\{\underline{P}(x),\tfrac{1}{2}P^* \}}^{P(x)}\dfrac{dz}{zu(z)}=\dfrac{1}{\ln2}\int_{\tfrac{1}{2}P^*}^{P(r_i)}\dfrac{dz}{zu(z)}, \]
	which is independent of the internal structure of $ T(r_i) $. In particular, $ \cl{M}_b(r_1) $ is independent of the internal structure of $ T(r_2) $ and we may assume that $ T(r_2) $ is a $ P(r_2) $-star. For any positive $ p\in \RR $, let $ F(p) $ be the forest we get from $ F $ by replacing the tree $ T(r_2) $ with a $ p $-star.\footnote{This star has $ \lfloor p\rfloor $ leaves and a centre vertex with value $ p-\lfloor p\rfloor $.} If $ P(r_2) $ is infinitesimally close to $ P^* $, then $ \cl{M}_b(r_1)=\dfrac{1}{\ln 2}\int_{\underline{P}(r_1)}^{P(r_1)}\dfrac{dz}{zu(z)} $ (see discussion paragraph after Example~\ref{exm: balanced 1}). To find the probability of $ r_1 $ in $ F $ we start with $ \cl{M}_b(r_1;F(P^*)) $ and integrate the changes in $ \cl{M}_b(r_1;F(p)) $ while lowering down $ p $ until we reach $ P(r_2) $. More precisely, for any $ x\in N,p\in\RR_+ $ we define $ \Delta(x)=\Delta(x,p)=-\dfrac{d}{dp}(\cl{M}_b(x;F(p))) $. By definition of $ \cl{M}_b(r_1) $, $ \Delta(r_1)dp=-\sum_{x\neq r_1}\Delta(x)dp $. Hence we can write,
	\begin{flalign*}
	\cl{M}_b(r_1)&=\cl{M}_b(r_1;F(P(r_2)))= \cl{M}_b(r_1;F(P^*))+\int_{P(r_2)}^{P^*}\Delta(r_1)dp\\	&=\cl{M}_b(r_1;F(P^*))-\sum_{x\neq r_1}\int_{P(r_2)}^{P^*}\Delta(x)dp.
	\end{flalign*}
	We will evaluate the last expression in three intervals, starting with $ p>\underline{P}(r_1) $, continuing with $ \tfrac{1}{2}P^*<p\leq \underline{P}(r_1) $, and finally for $ p\leq\tfrac{1}{2}P^* $. In each interval we will find $ \Delta(x) $ directly by taking an infinitesimal $ \delta>0 $ and denoting $ p'=p-\delta $. Then,
	\[ \Delta(x,p)dp=\cl{M}_b(x;F(p'))-\cl{M}_b(x;F(p)). \]
	Let $ A(p)=\{x\in N:P(x)=P^*(F_x(p)) \}=\{x\in T(r_1): P(x)\geq \max\{p,\tfrac{1}{2}P^*\}\} $. By Observation~\ref{obs: progeny}, $ A $ is a path: $  \{ a_1,\ldots,a_{k}=r_1 \} $. Let $ L(p)=\supp(\cl{M}_b;F(p))\cap T(r_1) $. Let $ \ell $ be the set of leaves in $ L $. Since $ \forall x\in L $, $ P(x)>\tfrac{1}{3}P^* $,  $ |\ell|\leq 2 $.\footnote{If $ L $ has more than two leaves, then there is a ``fork'' in $ T_1 $ with three leaves, each with progeny at least $ \tfrac{1}{3}P^* $. This means that $ P(r_1)\geq P^*+1 $, which is a contradiction.} 
	\begin{enumerate}[label=\arabic*)]
		\item $ p>\underline{P}(r_1) $. In this interval $ A=\{r_1\} $. There are at most three nodes except $ r_1 $ which incur a change in their probabilities under an infinitesimal decrease of $ p $. The first is $ r_2(F(p)) $, whose progeny is decreased, and hence loses part of its probability. Since $ u(p)=2 $ we have,
		\begin{flalign*}
		\Delta(r_2)dp&=\cl{M}_b(r_2;F(p'))-\cl{M}_b(r_2;F(p))=\dfrac{1}{\ln2}\left(\int_{\tfrac{1}{2}P^* }^{p'}\dfrac{dz}{zu(z)}-\int_{\tfrac{1}{2}P^* }^{p}\dfrac{dz}{zu(z)}\right)\\
		&=\dfrac{1}{2\ln2}\int_{p}^{p'}\dfrac{dz}{z}=-\dfrac{1}{2}\log_2\dfrac{p}{p'}. 	
		\end{flalign*}
		The nodes $ x\in\ell $, on the other hand, might gain extra probability due to the decrease in $ P^*(F_x(p)) $ (see the calculation of $ \cl{M}_b(a_1) $ in Example~\ref{exm: balanced 1}). That is, if $ p>P^*-P(x) $ then $ P^*(F_x(p))=p $ and $ P^*(F_x(p'))=p' $. If in addition $ \underline{P}(x)<\tfrac{1}{2}p $, then lowering $ p $ increases the probability of $ x $,
		\begin{flalign*}
		\Delta(x)dp=\dfrac{1}{\ln2}\left(\int_{\tfrac{1}{2}p'}^{P(x)}\dfrac{dz}{zu(z;F_x(p'))}-\int_{\tfrac{1}{2}p}^{P(x)}\dfrac{dz}{zu(z;F_x(p))} \right) \\
		=\dfrac{1}{\ln2}	\int_{\tfrac{1}{2}p'}^{\tfrac{1}{2}p}\dfrac{dz}{zu(z;F_x(p))}=\begin{cases}
		\dfrac{1}{u(\tfrac{1}{2}p)}\log_2\dfrac{p}{p'},&P^*-P(x)<\tfrac{1}{2}p\\
		\dfrac{1}{u(\tfrac{1}{2}p)+1}\log_2\dfrac{p}{p'},&P^*-P(x)\geq\tfrac{1}{2}p.
		\end{cases}
		\end{flalign*}
		The difference between the two cases is that if $ P^*-P(x)\geq\tfrac{1}{2}p $ then in $ F_{x}(p)$ there is another tree with progeny at least $ \tfrac{1}{2}p $ and hence $ u(\tfrac{1}{2}p;F_x(p)) $ increases by one. Consider now the two possibilities:
		\begin{enumerate}
			\item If $ |\ell|=1 $, then we bound $ u(\tfrac{1}{2}p)\geq 2 $ and get that $ \Delta(\ell)dp\leq \dfrac{1}{2}\log_2\dfrac{p}{p'} $. Thus, in this case $ \Delta(r_2)+ \Delta({\ell})\leq 0 $. We conclude that if $ P(r_2)>\underline{P}(r_1) $ and $ |\ell|=1 $, then
			\begin{flalign*}
			\cl{M}_b(r_1)\geq \cl{M}_b(r_1;F(P^*)).
			\end{flalign*}
			\item If $ |\ell|=2 $, then we can bound $ u(\tfrac{1}{2}p)\geq 3 $ and get that $ \Delta(\ell)dp\leq \dfrac{2}{3}\log_2\dfrac{p}{p'} $. We conclude that if $ P(r_2)>\underline{P}(r_1) $ and $ |\ell|=2 $, then $ (\Delta(r_2)+ \Delta({\ell}))dp\leq\dfrac{1}{6}\log_2\dfrac{p}{p'} $ and
			\begin{flalign*}
			\int_{P(r_2)}^{P^*}(\Delta(r_2)+ \Delta({\ell}))dp\leq \dfrac{1}{6}\log_2\dfrac{P^*}{P(r_2)},\\
			\cl{M}_b(r_1)\geq \cl{M}_b(r_1;F(P^*))-\dfrac{1}{6}\log_2\dfrac{P^*}{P(r_2)}.
			\end{flalign*}
		\end{enumerate}
		If $ P(r_2)>\underline{P}(r_1) $ then for $ P(r_2)\leq z\leq P^* $, $ u(z;F(P^*))=2 $ and 
		\[ \cl{M}_b(r_1;F(P^*))=\dfrac{1}{\ln 2}\int_{\underline{P}(r_1)}^{P(r_1)}\dfrac{dz}{zu(z;F(P^*))}\geq \dfrac{1}{2\ln2}\int_{P(r_2)}^{P^*}\dfrac{dz}{z}=\dfrac{1}{2}\log_2\dfrac{P^*}{P(r_2)}. \]
		Thus, from the above two cases we get that if $ P(r_2)>\underline{P}(r_1) $,
		\[ \cl{M}_b(r_1)\geq\begin{cases}
		\dfrac{1}{2}\log_2\dfrac{P^*}{P(r_2)},&|\ell|\leq 1,\\
		\dfrac{1}{3}\log_2\dfrac{P^*}{P(r_2)},&|\ell|=2.
		\end{cases}
		\]
		In any case, $ \cl{M}_b(r_1)>0 $. 
		\item $ \tfrac{1}{2}P^*<p\leq\underline{P}(r_1) $. Taking $ P(r_2)=\underline{P}(r_1) $ in the previous case, we get that
		\begin{flalign*}
		\cl{M}_b(r_1;F(P^*))-\sum_{x\neq r_1}\int_{\underline{P}(r_1)}^{P^*}\Delta(x)dp\geq\begin{cases}
		\dfrac{1}{2}\log_2\dfrac{P^*}{\underline{P}(r_1)},&|\ell|\leq 1,\\
		\dfrac{1}{3}\log_2\dfrac{P^*}{\underline{P}(r_1)},&|\ell|=2.
		\end{cases}
		\end{flalign*}
		Suppose that $ p=\underline{P}(r_1) $ and $ A(p)=\{a_1,a_2=r_1\} $. An infinitesimal decrease of $ p $ affects the probabilities of $ r_2 $ and $ \ell $ just like before. There is another probability which changes with $ p $, namely the probability of $ a_1 $. To evaluate $ \Delta(a_1) $ consider the forest $ F_{a_1} $. Here $ a_1=r_1(F_{a_1}) $ and $ p>\underline{P}(a_1) $. We therefore know from the previous case that $ \Delta(a_1)=-\Delta(r_2(F_{a_1}(p));F_{a_1}(p))-\Delta(\ell(F_{a_1}(p));F_{a_1}(p)) $. Notice that $ \Delta(r_2) $ depends only on $ p $; hence it is the same in $ F(p) $ and $ F_{a_1}(p) $:
		\[\Delta(r_2;F(p))=\Delta(r_2(F_{a_1}(p));F_{a_1}(p)). \]
		We get that \[ \sum_{x\neq r_1}\Delta(x)=\Delta(r_2)+\Delta(\ell)+\Delta(a_1)=\Delta(\ell)-\Delta(\ell(F_{a_1});F_{a_1}). \]
		We claim that $ \Delta(\ell(F_{a_1});F_{a_1})\geq \Delta(\ell) $. To see that, let $ y\in\ell(F_{a_1}) $. Then $ \cl{M}_b(y;F_{a_1})=$\\$\cl{M}_b(y;F_{a_1,y})>0 $.\footnote{The forest $ F_{a_1,y} $ is the forest we get from $ F $ after removing the out-edges of $ a_1 $ and $ y $.} If $ \cl{M}_b(y)=\cl{M}_b(y;F_y)\neq\cl{M}_b(y;F_{a_1,y}) $ then the reason could be one of two: 
		\begin{enumerate}
			\item The first is when $ P(y) $ is above the middle in $ F_{a_1} $ but not in $ F $; that is, $ P^*(F_{a_1,y})=p<2P(y)<P^*(F_y)=P^*-P(y) $. In this case, $ \cl{M}_b(y)=0\Longrightarrow\Delta(y)=0 $. Since $ \Delta(y;F_{a_1})>0 $ (the probability of $ y $ increases when $ p $ drops), we get that $ \Delta(y;F_{a_1})> \Delta(y) $.
			\item The second is when both $ P(y;F)>0 $ and $ P(y;F_{a_1})>0 $, but $ u(\tfrac{1}{2}p;F_{y})=u(\tfrac{1}{2}p;F_{a_1,y})+1 $. This happens when $ P(a_1)-P(y)<\tfrac{1}{2}p $ and $ P^*-P(a_1)<\tfrac{1}{2}p $. Since $ u(\tfrac{1}{2}p;F_{a_1,y})<u(\tfrac{1}{2}p;F_y) $ we get again that $ \Delta(y;F_{a_1})>\Delta(y) $.
		\end{enumerate}
		This analysis remains true for all $ \tfrac{1}{2}P^*<p\leq\underline{P}(r_1) $, no matter the size of $ A $. We conclude that while $ \tfrac{1}{2}P^*<p\leq\underline{P}(r_1) $, $ \sum_{x\neq r_1}\Delta(x)\leq 0 $; hence if $ P(r_2)>\tfrac{1}{2}P^* $ we still have
		\begin{flalign*}
		\cl{M}_b(r_1)\geq\begin{cases}
		\dfrac{1}{2}\log_2\dfrac{P^*}{\underline{P}(r_1)},&|\ell|\leq 1,\\
		\dfrac{1}{3}\log_2\dfrac{P^*}{\underline{P}(r_1)},&|\ell|=2.
		\end{cases}
		\end{flalign*}
		\item $ p<\tfrac{1}{2}P^* $. In this interval $ \Delta(\ell)=0 $ since for $ y\in\ell $, $ P(y)<\tfrac{1}{2}P^* $ and $ P^*(F_y)=P^*-P(y)>\tfrac{1}{2}P^*>p $; hence the decrease in $ p $ will not influence $ \cl{M}_b(y) $. It is possible that for some $ a_i\in A $, $ \Delta(\ell(F_{a_i});F_{a_i})>0 $, but as before, this only works to our advantage.\footnote{That is, a drop in $ p $ would mean higher probability for $ \ell(F_{a_i}) $ which means that $ a_i $ loses probability and then $ r_1 $ gains probability.} In this interval we also have $ \Delta(r_2)=0 $. It might be, though, that $ \Delta(r_2(F_{a_1});F_{a_1})=\Delta(r_2;F_{a_1})<0 $. This happens when $ \underline{P}(a_1)<p<\tfrac{1}{2}P^* $ and $P(a_1)<2p $. In this case $ \Delta(a_1)dp $ might be as high as $ -\Delta(r_2;F_{a_1})dp=\dfrac{1}{2}\log_2\dfrac{p}{p'} $, as we have seen in case $ 1) $. The probability of $ r_1 $ will only be affected while $ p>\tfrac{1}{2}\underline{P}(r_1) $; afterwards the next vertex in $ A $ will compensate for this probability.\footnote{And notice, as before, that $ r_2(a_i) $ is the same for all $ i $ and $ \Delta(r_2) $ only depends on $ p $.} Hence if $ |\ell|= 1 $, then 
		\begin{flalign*}
		\cl{M}_b(r_1)\geq \dfrac{1}{2}\log_2\dfrac{P^*}{\underline{P}(r_1)}-\int_{\tfrac{1}{2}\underline{P}(r_1)}^{\tfrac{1}{2}P^*}\Delta(a_1)dp\geq \dfrac{1}{2}\log_2\dfrac{P^*}{\underline{P}(r_1)}
		-\dfrac{1}{2}\log_2\dfrac{P^*}{\underline{P}(r_1)}=0.
		\end{flalign*}
		To complete the proof we will show that if $ |A|\geq 2 $ then while $ p>\underline{P}(r_1) $, $ |\ell|= 1 $. Indeed assume the opposite. Let $ x=\argmin\limits_{x\in\ell}P(x) $. Then $ p>\underline{P}(r_1)\geq P(a_1)>2P(x) $. However, then $ P(x)<\dfrac{1}{2}p\leq\dfrac{1}{2}P^*(F_x(p)) $ and $ \cl{M}_b(x)=0 $, in contradiction. 
	\end{enumerate}
\end{proof}

\end{document}